\declaretheorem[name=Theorem,sibling=theorem]{re-thm}
\declaretheorem[name=Lemma,sibling=theorem]{re-lem}
\newcommand{\newparentheses}[3]{%
  \expandafter\newcommand\csname #1\endcsname[1]{#2##1#3}%
  \expandafter\newcommand\csname #1L\endcsname[1]{\bigl#2##1\bigr#3}%
  \expandafter\newcommand\csname #1XL\endcsname[1]{\Bigl#2##1\Bigr#3}%
  \expandafter\newcommand\csname #1XXL\endcsname[1]{\biggl#2##1\biggr#3}%
  \expandafter\newcommand\csname #1V\endcsname[1]{\left#2##1\right#3}}
\newcommand{\onenewattribute}[3]{%
  \@ifundefined{#1}{\let\@@def\newcommand}{\let\@@def\renewcommand}%
  \expandafter\@@def\csname #1\endcsname[2][]{%
    \ifthenelse{\equal{##1}{}}%
    {#2\csname #3\endcsname{##2}}%
    {#2_{##1}\csname #3\endcsname{##2}}}}
\newcommand{\newattribute}[2]{%
  \onenewattribute{#1}{#2}{parens}%
  \onenewattribute{#1L}{#2}{parensL}%
  \onenewattribute{#1XL}{#2}{parensXL}%
  \onenewattribute{#1V}{#2}{parensV}}
\newcommand{\newproblem}[2]{%
\begin{description}
\item[#1] #2
\end{description}
}
\newcommand{\defproblem}[1]{%
\textbf{#1}%
}
\newcommand{\function}[1]{%
\textsc{#1}%
}
\newcommand{\chomp}[1]{}
\newcommand{\arxiv}[1]{#1}
\newcommand{\notarxiv}[1]{}
\titlerunning{Efficiently Inferring Pairwise SPR Adjacencies between Phylogenetic Trees}
\title{Efficiently Inferring Pairwise Subtree Prune-and-Regraft Adjacencies between Phylogenetic Trees\footnote{This work was funded by National Science Foundation award 1223057 and 1564137. Chris Whidden is a Simons Foundation Fellow of the Life Sciences Research Foundation. The research of Frederick Matsen was supported in part by a Faculty Scholar grant from the Howard Hughes Medical Institute and the Simons Foundation.}}
\author[1]{Chris Whidden}
\author[2]{Frederick A. Matsen IV}
\affil[1]{Program in Computational Biology, Fred Hutchinson Cancer Research Center, Seattle, WA, USA\\
\texttt{cwhidden@fredhutch.org}}
\affil[2]{Program in Computational Biology, Fred Hutchinson Cancer Research Center, Seattle, WA, USA\\
\texttt{matsen@fredhutch.org}}
\authorrunning{C. Whidden and F.\,A. Matsen IV}
\subjclass{
F.2.2 Nonnumerical Algorithms and Problems,
G.2.1 Combinatorics,
G.2.2 Graph Theory,
J.3 Life and Medical Sciences
}
\keywords{phylogenetics, subtree prune-and-regraft, agreement forest, discrete optimization}
\begin{document}

\maketitle


\begin{abstract}

We develop a time-optimal $\OhOf{mn^2}$-time algorithm to construct the subtree prune-regraft (SPR) graph on a collection of $m$ phylogenetic trees with $n$ leaves.
This improves on the previous bound of $\OhOf{mn^3}$.
Such graphs are used to better understand the behaviour of phylogenetic methods and recommend parameter choices and diagnostic criteria.
The limiting factor in these analyses has been the difficulty in constructing such graphs for large numbers of trees.
We also develop the first efficient algorithms for constructing the nearest-neighbor interchange (NNI) and tree bisection-and-reconnection (TBR) graphs.


\end{abstract}



\chomp{

\section{Chris's brainstorming}
\begin{enumerate}

\item Can we do something to reduce the size requirement? Some kind of succinct representation?

\item We have to be careful how we handle comparing actual trees. The savings of this scheme come from not expanding the neighbors. For the purpose of enumerating all neighbors of a given tree that aren't in the data structure (exactly what we want for topography), I think we need to store extra information with each AF that specifies which edges not to attach to. This could be a linear size bit vector. Then, given a tree we first generate its $\OhOf{n}$ AFs, for a total of $\OhOf{n^2}$ time and space. For each AF, we use the bit vector to only apply SPR operations that will lead to a unique tree. We can also preprocess the bit vectors to remove the duplicate NNIs that will already lead to duplicate trees! This is $\OhOf{n^2 + np}$, where $p$ is the number of neighbors that haven't been explored yet. Worst case $\OhOf{n^3}$, but likely as good as is possible.

\item Following on from the previous idea, we could also preignore moves that are expected to lead to trees with poor likelihoods (e.g. AFs that split an important clade or move a clade to an obviously poor location). This is mostly just brainstorming for now but we could mention it in the conclusions.

\item More brainstorming - this also allows us to pick uniformly from the set of unexplored neighbors

\item For concurrency - instead of expanding the full neighborhood, we could work with the bit vectors and allow processes to update them. This wouldn't be perfect but could reduce overlap.
\end{enumerate}

\section{Erick's brainstorming}

\subsubsection{Pure-recognition problem}
\begin{itemize}
\item The number of splits of a collection of trees is in principle $O(mn)$ , but a collection of the top trees for a given data set will have many fewer splits. So ideally we would just keep track of the distinguishing splits.
\item If we are exploring tree space, a new tree will typically have a split that none of the previously explored trees have. We could add this to the set of distinguishing splits, and every known tree then ``learns'' that it doesn't have this split. This could be a positive learning, in which we add splits of that tree that are incompatible with the new split, or it could be negative, saying that it just doesn't have this new split
\item If we encode splits as binary vectors and have subtree membership information cached at internal nodes also encoded as binary vectors, we can quickly generate these new splits as we traverse the path in the tree from the old subtree location to the new location by subtracting subtree membership vectors
\item For very very large trees, one could imagine not even storing each taxon, but rather storing one taxon for each subtree that is shared across all the stored trees. So for this we'd need to be progressively adding taxa to our split vectors
\end{itemize}

\subsubsection{Hashing}
\begin{itemize}
\item One could imagine replacing Newick delimiters with something more informative, such as the last digit of a taxon identifier with a bit flipped to indicate that it's effectively an open parenthesis
\item What about just storing the taxon id's concatenated together along with a vector of string indices representing a DFS on the tree? That way we could extract subtrees very easily. It also seems like SPRs would take some bookkeeping, but could be done in sublinear time if we can move blocks of data around and do vectorized operations (e.g. add 5 to the next 200 entries).
\end{itemize}

\subsubsection{Speeding up insertion into the data structure}

Can we do something smart about inserting in neighboring edges?

}

\section{Introduction}


Phylogenetic methods find an optimal evolutionary tree or a posterior distribution on trees by repeatedly modifying a current tree through a series of ``moves.''
The most commonly applied moves are subtree prune-and-regraft (SPR) moves~\cite{hein1990reconstructing} (Fig.~\ref{fig:spr}) and nearest neighbor interchange (NNI) moves, which are a subset of the SPR moves~\cite{hohna2012guided}.
Some methods also apply tree bisection-and-reconnection (TBR) moves, which are equivalent to applying two SPR moves.
Maximization methods aim to find the ``best'' tree according to an optimization criteria such as likelihood~\cite{Price2010-fi,Stamatakis2006-yz} or parsimony~\cite{swofford2001paup}, while Bayesian statistical methods~\cite{Ronquist2012-hi,bouckaert2014beast} aim to efficiently sample trees.
In both cases the topology of the trees is the most difficult parameter to optimize or sample~\cite{lakner2008efficiency,hohna2012guided,whidden2015quantifying}.
Applying tree-modifying moves in the process of maximization or sampling can be thought of as traversing the graph consisting of trees as vertices and moves as edges.

\begin{figure}[t]
	\centering
	\hspace*{\stretch{2}}
	\subcaptionbox{\label{fig:tree}}{\includegraphics[scale=0.7,trim=0 -8 0 0]{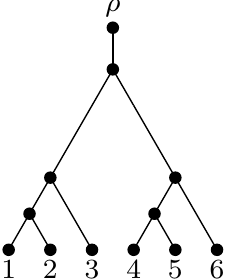}}
	\hspace*{\stretch{1}}
	\subcaptionbox{\label{fig:subtree}}{\includegraphics[scale=0.7,trim=0 -8 0 0]{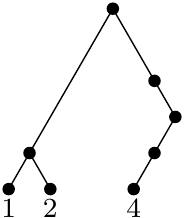}}
	\hspace*{\stretch{1}}
	\subcaptionbox{\label{fig:induced}}{\includegraphics[scale=0.7,trim=0 -8 0 0]{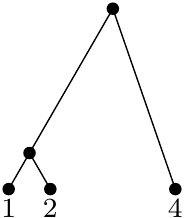}}
	\hspace*{\stretch{1}}
	\subcaptionbox{\label{fig:spr}}{\includegraphics[scale=0.7]{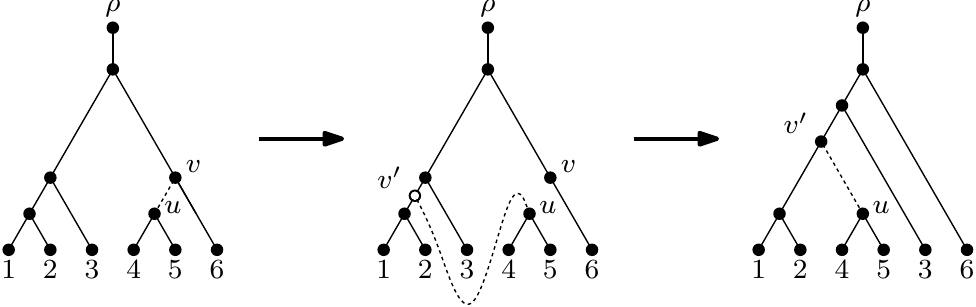}}
	\hspace*{\stretch{2}}

	\caption{(a) A rooted $X$-tree $T$.
		(b) $T(V)$, where $V = \set{1,2,4}$.
		(c) $T|V$.
		(d) An SPR operation transforms $T$ into a new tree by \emph{pruning a subtree} and \emph{regrafting} it in another location.
	}
	\label{fig:trees}
\end{figure}

\begin{figure}[t]
	\hspace*{\stretch{1}}
	\subcaptionbox{\label{fig:ds6}}{\includegraphics[width=0.5\textwidth]{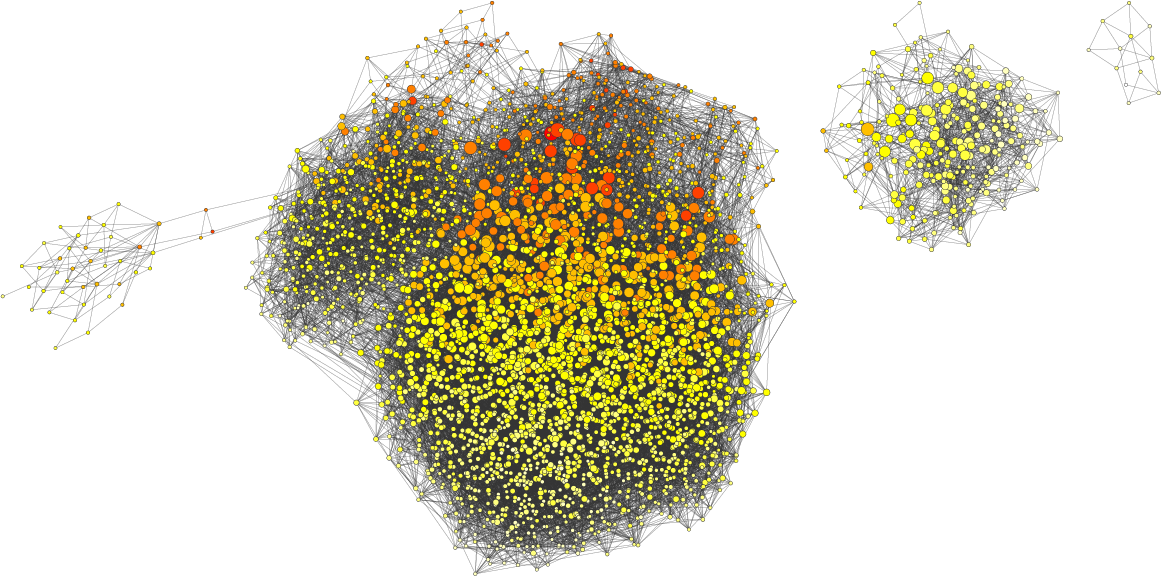}}
	\hspace*{\stretch{2}}
	\subcaptionbox{\label{fig:ds7}}{\includegraphics[width=0.25\textwidth]{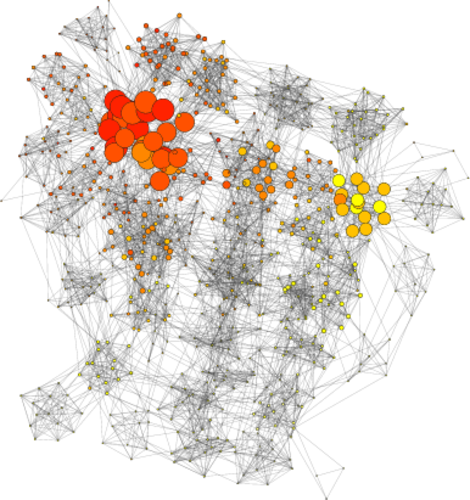}}
	\hspace*{\stretch{1}}

	\caption{Two SPR graph figures of high-probability tree posterior subsets from~\cite{whidden2015quantifying}.
    Node size indicates posterior probability.
    Color (red-yellow-white) indicates SPR distance from the highest probability tree.
    (a) ``Peaky'' distributions separate high probability trees into components.
    (b) Closely related sequences induce lattice-like features.
    }
    \label{fig:spr_graph_examples}
\end{figure}

One can gain insight into the operation of phylogenetic inference methods by explicitly constructing the subgraph composed of trees that have been visited by running an inference method (Fig.~\ref{fig:spr_graph_examples}).
In a highly cited 1991 paper, Maddison~\cite{maddison1991discovery} developed the notion of ``islands'' of neighboring equally-parsimonious trees, and found such islands containing hundreds of trees when running on real data, and indicated their importance for parsimony tree search.
In previous work, we built the subgraph of the SPR graph consisting of the thousands of highest posterior probability trees as inferred by the Markov chain Monte Carlo (MCMC) algorithm, which is the standard means of inferring a posterior distribution on phylogenetic trees.
By doing so, we found significant graph structure relevant for the design of phylogenetic inference software.
Specifically, we found multiple peaks (Fig.~\ref{fig:ds6}), indicating multimodal posteriors, and lattice-like structures (Fig.~\ref{fig:ds7}), indicating a need to collapse closely-related~sequences.

Although graphs connecting a set of phylogenetic trees have been an object of study since 1991 \cite{maddison1991discovery}, the construction of these graphs has not been formulated as a problem for research.
For these early studies, no special methods were needed to build graphs on tens to hundreds of trees.
However modern phylogenetic posterior samples, with hundreds of thousands of trees, demand efficient algorithms.
Indeed, we were limited in our previous work to graphs of several thousand topologies by the lack of efficient algorithms.
This is no trivial task, as it is NP-hard to even determine the minimum distance between a pair of trees in terms of NNI~\cite{dasgupta97computing}, SPR~\cite{bordewich2005computational,hickey2008sdc}, or TBR~\cite{allen01} operations.
Thus we propose:

\newproblem{SPR Graph Construction Problem}{Given $m$ binary phylogenetic trees with $n$ leaves, determine which pairs of trees differ by exactly one SPR move.}

\chomp{Similarly, we can define the corresponding \defproblem{NNI Graph Construction Problem} and \defproblem{TBR Graph Construction Problem} using NNI or TBR moves instead of SPR moves.}

Two methods have previously been introduced for constructing SPR graphs, and we are not aware of any previous methods for constructing NNI or TBR graphs.
The first method~\cite{whidden2015quantifying} compares each pair of trees in a collection using a fixed-parameter algorithm~\cite{whidden2013fixed} to determine whether their SPR distance is 1.
Although the SPR distance is NP-hard, this fixed-parameter algorithm scales exponentially only with the distance computed and linearly with $n$.
This pairwise comparison method thus takes $\OhOf{n}$-time for each pair of trees, for a total of $\OhOf{m^2n}$-time ($\OhOf{m^2n^3}$-time for unrooted trees).
Still, pairwise comparisons are only feasible for small SPR graphs, because of the rapidly growing $m^2$ factor.

The second method for constructing SPR graphs~\cite{whidden2016ricci} relies on the observation that SPR graphs are relatively sparse.
Each tree has $\OhOf{n^2}$ SPR neighbors~\cite{Song2003-gf}.
By storing the $\OhOf{n}$-size Newick~\cite{felsenstein1990newick} strings of trees, one can enumerate the neighbors of a given tree in $\OhOf{n^3}$-time.
This neighbor-enumeration method takes $\OhOf{mn^3}$-time to construct an SPR graph of $m$ trees with $n$ leaves.

The biggest obstacle that slowed these methods was the requirement to explicitly consider each possible pair of neighbors.
The pairwise comparison method does so by considering every pair of trees, at the cost of an extra $\OhOf{m}$ factor.
The neighbor-enumeration method directly considers every neighbor of each tree, adding an extra $\OhOf{n}$ factor per tree for Newick string operations.
All these methods consider trees as the objects and look for connections between them in the SPR graph using structures called agreement forests (AFs).

In this paper we use agreement forests as the objects of interest, which we enumerate and store using new algorithms and data structures. We contribute:
\begin{itemize}
\item A time-optimal $\OhOf{mn^2}$-time algorithm for the (rooted and unrooted) \defproblem{SPR Graph Construction Problem}
\item An $\OhOf{mn^2}$-time algorithm for the \defproblem{NNI Graph Construction Problem}
\item A time-optimal $\OhOf{mn^3}$-time algorithm for the \defproblem{TBR Graph Construction Problem}.
\end{itemize}

The SPR and TBR algorithms are optimal in the sense that their running times correspond to the number of possible edges in the corresponding graphs given $n$ and $m$.
The algorithms are enabled by a variant of the Newick string format, dubbed \emph{smallest descendant label Newick (SDLNewick)}, that can uniquely represent agreement forests, and
a new \emph{AFContainer} data structure that stores and compares tree adjacencies using SDLNewick strings of AFs.
We have deferred proofs besides that of our main result to an appendix\notarxiv{ in the full arXiv version of this paper~\cite{whidden2017efficiently}}, as well as the TBR and NNI graph algorithms.

\section{Preliminaries}

\begin{figure}[t]
	\hspace*{\stretch{1}}
	\subcaptionbox{\label{fig:adjacency_triangle}}{\includegraphics[scale=0.74]{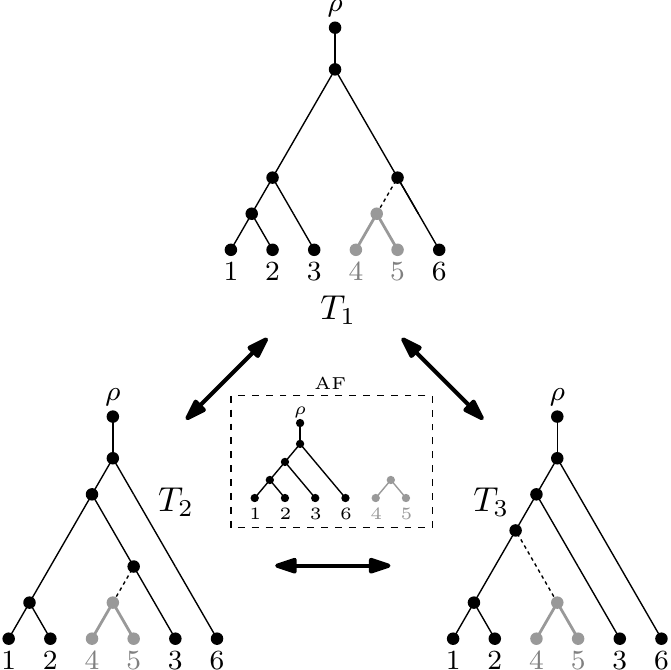}}
	\hspace*{\stretch{2}}
	\subcaptionbox{\label{fig:spr_graph}}{\includegraphics[width=0.35\textwidth]{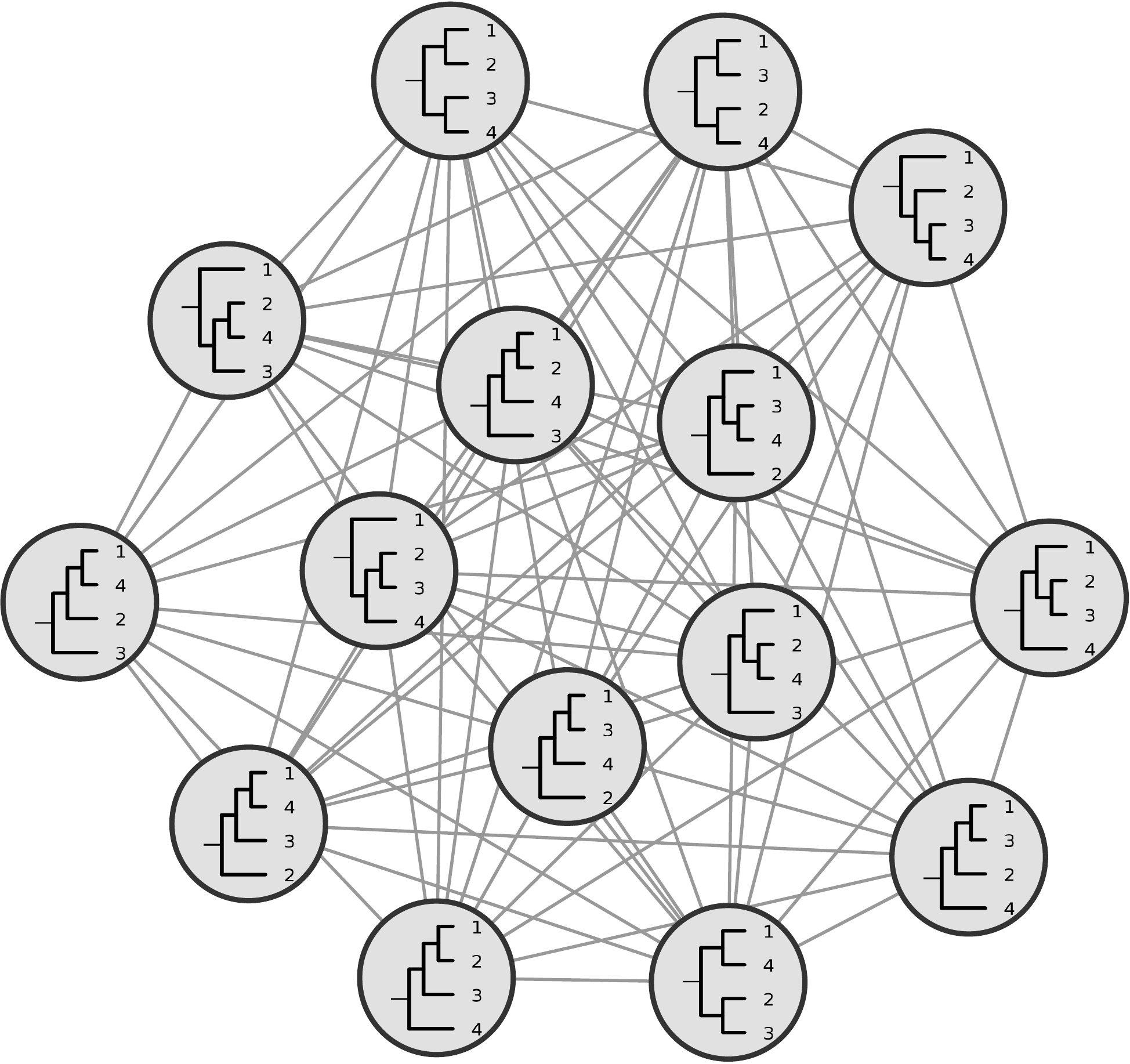}}
	\hspace*{\stretch{1}}

	\caption{(a) Three rooted trees that share a 2-component agreement forest (AF).
    	Each can be obtained from the others by an SPR operation moving the subtree induced by leaves 4 and 5.
		(b) The complete SPR graph on 4-leaf rooted trees.
	}
\end{figure}

A tree is an acyclic graph.
The \emph{leaves} of a tree are nodes with one neighbor and \emph{internal nodes} have multiple neighbors.
An (unrooted binary phylogenetic) $X$-tree is a tree $T$ whose nodes each have one or three neighbors, and whose leaves are bijectively labeled with the members of a label set $X$.
\emph{Suppressing} a node $v$ deletes $v$ and its incident edges; if $v$ has exactly two neighbors $u$ and $w$, then they are reconnected by a new edge $(u,w)$.
$T(V)$ is the unique subtree of $T$ with the fewest nodes that connects all nodes in $V \subset X$.
The $V$-tree induced by $T$ is the smallest tree $T|V$ that can be obtained from $T(V)$ by suppressing unlabeled nodes with fewer than three neighbors.

A \emph{rooted $X$-tree} is defined similarly to an unrooted $X$-tree, with the exception that one of the internal nodes is called the \emph{root} and is adjacent to a leaf labeled $\rho$.
Note that this differs from the standard definition of a rooted tree, in which the root is the only degree two internal node.
This $\rho$ node represents the position of the original root in a forest of the trees, as described below.
Observe that the $\rho$ node can be attached to such a degree two internal node, so these two notions of rooted trees are equivalent.
The \emph{parent} of a node in a rooted tree is its closest neighbor to the root; the other two neighbors are its \emph{children} (Fig.~\ref{fig:trees}).

We assume without loss of generality that the label set $X$ consists of distinct integer values from $1, 2, \ldots n$.
Moreover, for this paper we assume that $n \le 2^{64}-1$ (i.e. able to fit in a standard 64 bit unsigned integer format).
Larger trees are not feasible to infer computationally or logistically.

An \emph{unrooted $X$-forest $F$} is a collection of (not necessarily binary) trees $T_1, T_2, \ldots T_k$ with respective label sets $X_1, X_2, \ldots X_k$.
The label sets are disjoint and complete, that is, $X_i$ and $X_j$ are disjoint, for all $1 \le i \ne j \le k$, and $X = X_1 \cup X_2 \cup \ldots \cup X_k$.
We say $F$ \emph{yields} the forest with components $T_1|X_1, T_2|X_2, \ldots, T_k|X_k$, that is, the smallest forest that can be obtained from $F$ by suppressing unlabeled nodes with fewer than three neighbors.
In the rooted case $\rho \in X_1$ and the unlabeled \emph{component roots} (the nodes that were connected to the $\rho$ component by an edge before cutting) are not suppressed in the yielded forest.
Each component $T_i$ is then rooted at its respective component root.
Only the root of $T_1$ is adjacent to $\rho$ and the remaining roots are of degree two.
For an edge set $E$, $F - E$ denotes the forest obtained by deleting the edges in $E$ from $F$ and $F \div E$ the yielded forest.
For simplicity we say that $F \div E$ is a \emph{forest of} $F$.

A \emph{subtree-prune-regraft} (uSPR) operation~\cite{hein1990reconstructing} on an unrooted $X$-tree $T$ cuts an edge $e = (u,v)$.
This divides $T$ into subtrees $T_u$ and $T_v$, containing $u$ and $v$ respectively.
Then it introduces a new node $v'$ into $T_v$ by subdividing an edge of $T_v$, and adds an edge $(u,v')$.
Finally, $v$ is suppressed.
An rSPR operation is defined similarly on a rooted tree but $v$ must be the parent of $u$ (Fig~\ref{fig:spr}).
If $v'$ is adjacent to $\rho$ then it becomes the root.

A \emph{tree-bisection-and-reconnection} (TBR) operation~\cite{allen01} is similar to a uSPR operation, with the exception that it also introduces a new node $u'$ into $T_u$ by subdividing an edge of $T_u$, adds the edge $(u',v')$ instead of $(u,v)$, and suppresses $u$.
A \emph{nearest-neighbor-interchange} (NNI) operation is an SPR operation where $v$ and the introduced node $v'$ share a neighbor.

SPR operations give rise to a distance measure $\dspr{\cdot,\cdot}$ between $X$-trees defined as the minimum number of SPR operations required to transform one tree into the other.
We distinguish between $\drspr{\cdot,\cdot}$ on rooted trees and $\duspr{\cdot,\cdot}$ on unrooted trees.
The TBR distance $\dtbr{\cdot,\cdot}$ on unrooted trees is defined analogously with respect to TBR operations.
Observe that these distances are the shortest path distances in the respective graphs.

Given trees $T_1$ and $T_2$, a forest $F$ is an \emph{agreement forest} (AF) of $T_1$ and $T_2$ if it is a forest of both trees.
$F$ is a \emph{maximum agreement forest} (MAF) if it has the smallest possible number of components, denoted $m(T_1, T_2)$.
For two unrooted trees $T_1$ and $T_2$, Allen and Steel~\cite{allen01} showed that $\dtbr{T_1, T_2} = m(T_1, T_2) - 1$.
This implies that two unrooted trees which differ by a single TBR operation must share a two component unrooted MAF:

\begin{restatable}{re-lem}{tbr_two_component}
\label{lem:tbr_two_component}
Let $T_1$ and $T_2$ be two distinct unrooted trees.
Then there exists an MAF $F$ of $T_1$ and $T_2$ with two components if, and only if, $\dtbr{T_1, T_2} = 1$.
\end{restatable}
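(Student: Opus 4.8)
The plan is to derive the statement directly from the theorem of Allen and Steel~\cite{allen01} quoted above, $\dtbr{T_1,T_2} = m(T_1,T_2) - 1$, using only the fact that the phrase ``$T_1$ and $T_2$ admit an MAF with two components'' is a restatement of ``$m(T_1,T_2) = 2$''. First I would record that the family of agreement forests of $T_1$ and $T_2$ is finite and nonempty: the forest whose components are the isolated labeled leaves is obtained from each $T_i$ by deleting every edge and then yielding (each unlabeled node is left isolated, hence has fewer than three neighbours and is suppressed), so it is an agreement forest of the pair. Hence an MAF exists, and by definition every MAF has exactly $m(T_1,T_2)$ components.

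With that in place the equivalence is immediate in both directions. For the forward direction I would argue that if some MAF $F$ has two components then $m(T_1,T_2) = 2$, and the Allen--Steel identity gives $\dtbr{T_1,T_2} = 2 - 1 = 1$. For the converse I would start from $\dtbr{T_1,T_2} = 1$, apply the identity to obtain $m(T_1,T_2) = 2$, and conclude that the (existing) MAF has exactly two components. I would also point out where distinctness of $T_1$ and $T_2$ is used: it forces $m(T_1,T_2) \ge 2$, since a one-component agreement forest would be a single tree on the full label set $X$ yielded by both trees, which would force $T_1 = T_2$; this rules out the degenerate case $m(T_1,T_2) = 1$ (equivalently $\dtbr{T_1,T_2} = 0$).

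I do not expect a real obstacle here. The only points needing a moment's care are the equivalence between ``there is an MAF with two components'' and $m(T_1,T_2) = 2$, which rests on the definition of an MAF as a minimizer of the component count over the nonempty finite set of agreement forests, and, for that nonemptiness, the small check that the isolated-leaf forest really is a yielded forest of each $T_i$ under the convention that unlabeled nodes of degree less than three are suppressed. Everything else follows at once from the Allen--Steel equality.
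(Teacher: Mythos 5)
Your proposal is correct and matches the paper's intended argument: the paper gives no separate proof of this lemma (even in the appendix), treating it as an immediate consequence of the Allen--Steel identity $\dtbr{T_1,T_2} = m(T_1,T_2)-1$, which is exactly the route you take. Your added care about the existence of an MAF and the role of distinctness in excluding $m(T_1,T_2)=1$ only fills in details the paper leaves implicit.
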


For two rooted trees $T_1$ and $T_2$, Bordewich and Semple~\cite{bordewich2005computational} showed that $\drspr{T_1,T_2} = m(T_1, T_2) - 1$, by introducing the root node augmentation $\rho$ described above.
This implies that two rooted trees which differ by a single SPR operation must share a two component rooted MAF (Fig.~\ref{fig:adjacency_triangle}):

\begin{restatable}{re-lem}{rspr_two_component}
\label{lem:rspr_two_component}
Let $T_1$ and $T_2$ be two distinct rooted trees.
Then there exists an MAF $F$ of $T_1$ and $T_2$ with two components if, and only if, $\drspr{T_1, T_2} = 1$.
\end{restatable}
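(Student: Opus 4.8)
The plan is to read the statement off Bordewich and Semple's identity $\drspr{T_1,T_2} = m(T_1,T_2) - 1$, quoted immediately above, so that the only thing actually requiring an argument is the translation between the phrase ``there exists an MAF with two components'' and the integer $m(T_1,T_2)$.

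First I would note that agreement forests of $T_1$ and $T_2$ always exist: the forest whose components are the $|X|$ singletons is a forest of both trees. Hence the set of agreement forests is nonempty, the minimum component count $m(T_1,T_2)$ is attained, and by definition an MAF is precisely an agreement forest realising this minimum. Consequently the existence of an MAF with two components is equivalent to $m(T_1,T_2) = 2$. Substituting into $\drspr{T_1,T_2} = m(T_1,T_2) - 1$ rewrites this as $\drspr{T_1,T_2} = 1$, which yields both directions of the equivalence simultaneously: if $\drspr{T_1,T_2}=1$ then $m(T_1,T_2)=2$ and an MAF has two components, while if some MAF has two components then $m(T_1,T_2)=2$ and $\drspr{T_1,T_2}=1$. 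The hypothesis that $T_1$ and $T_2$ are distinct enters only as a consistency remark --- if they coincided, the one-component forest with label set $X$ would be an AF and $m(T_1,T_2)=1$ --- and is not otherwise used.

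I do not expect a genuine obstacle: the lemma is simply the distance-one specialisation of the Bordewich--Semple theorem, exactly as Lemma~\ref{lem:tbr_two_component} is the distance-one specialisation of Allen and Steel's $\dtbr{T_1,T_2} = m(T_1,T_2)-1$. The only step deserving a sentence of care is the (routine) check that $m(T_1,T_2)$, and hence an MAF, is well defined in the rooted setting, where the definitions of forest and of $F \div E$ carry the extra $\rho$-component bookkeeping needed for the Bordewich--Semple identity to apply.
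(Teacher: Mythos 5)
Your proposal is correct and matches the paper's treatment: the paper gives no separate argument for this lemma, presenting it (like Lemma~\ref{lem:tbr_two_component}) as an immediate consequence of the cited Bordewich--Semple identity $\drspr{T_1,T_2} = m(T_1,T_2)-1$, which is exactly the distance-one specialization you carry out. Your added remarks that $m(T_1,T_2)$ is well defined and that a two-component MAF exists precisely when $m(T_1,T_2)=2$ are the same routine bookkeeping the paper leaves implicit, so there is nothing further to reconcile.
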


No general MAF formulation has been identified as equivalent to the unrooted SPR distance and there are reasons to believe that a directly analogous formulation does not exist~\cite{whidden2015calculating}.
However, we prove that two unrooted trees differ by exactly one SPR operation if and only if they share an appropriately defined hybrid two-component MAF.
Given unrooted trees $T_1$ and $T_2$, a forest $F$ is a \emph{uSPR 2-agreement forest} if it is a two-component forest of both trees such that one component is a rooted tree and the other is an unrooted tree.
The node connected to the removed edge in both trees is the component root.

\begin{restatable}{re-lem}{usprtwocomponent}
\label{lem:uspr_two_component}
There exists an uSPR 2-agreement forest $F$ of two distinct unrooted trees, $T_1$ and $T_2$, with two components if, and only if, $\drspr{T_1, T_2} =~1$.
\end{restatable}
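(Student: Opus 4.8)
The plan is to prove the biconditional by directly matching a single uSPR move with the two-component forest it produces; the statement then amounts to an unwinding of the definition of an SPR operation together with careful bookkeeping of which degree-two nodes get suppressed.

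\emph{($\Leftarrow$)} Suppose $\duspr{T_1,T_2}=1$, witnessed by a uSPR operation that cuts $e=(u,v)$ in $T_1$, subdivides an edge of $T_v$ by a new node $v'$, adds $(u,v')$, and suppresses $v$. I would take $F$ to be the forest obtained by deleting $e$ from $T_1$, retaining $u$ as an (unsuppressed) component root and yielding the rest, so $F=\{\hat{R},\,T_v|X_v\}$, where $\hat{R}$ is the $u$-side component of $T_1-e$ and $X_u\sqcup X_v=X$ is the induced bipartition. The key step is to observe that $F$ is obtained in the same way from $T_2$ by deleting the edge $(u,v')$: on the $u$-side this isolates precisely $\hat{R}$ again (the move left that subtree untouched apart from re-attaching $u$), while on the other side $v'$ becomes degree two and is suppressed and $v$ was already suppressed by the move, yielding exactly $T_v|X_v$. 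Since $u$ is the endpoint common to the edge deleted from $T_1$ and the edge deleted from $T_2$, it is the required shared component root, and the other component is unrooted, so $F$ is a uSPR 2-agreement forest (and it is maximum, although maximality is not needed here).

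\emph{($\Rightarrow$)} Conversely, let $F=\{R,U\}$ be a uSPR 2-agreement forest of $T_1$ and $T_2$ with $R$ rooted at $r$ and $U$ unrooted. Because $F$ has two components and each $T_i$ is a tree, $F$ is yielded from $T_i$ by deleting a single edge, which by the definition of the shared root is $e_i=(r,x_i)$; here $x_i$ becomes degree two when $e_i$ is cut and is suppressed, so $x_i$ records an edge $\epsilon_i$ of $U$ onto which $R$ is regrafted in $T_i$. I would then exhibit the uSPR move on $T_1$ that cuts $(r,x_1)$, subdivides the edge of the resulting $U$-side component corresponding to $\epsilon_2$ with a new node, reconnects $r$ to that node, and suppresses $x_1$. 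Since an unrooted $X$-tree is uniquely determined by a two-component forest together with the choice of pruned component and regraft edge, this move produces exactly $T_2$; hence $\duspr{T_1,T_2}\le 1$, and $\duspr{T_1,T_2}=1$ because $T_1\neq T_2$. (If $U$ had at most one edge, or $\epsilon_1=\epsilon_2$, then $T_1=T_2$, contrary to hypothesis, so the edge corresponding to $\epsilon_2$ is genuinely available for the regraft.)

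The main obstacle is not conceptual but the meticulous tracking of which endpoints are suppressed and which are retained on both sides of each cut — in particular, checking that the single unsuppressed endpoint ($u$, respectively $r$) serves as the component root simultaneously in $T_1$ and in $T_2$. This shared-endpoint condition is exactly what separates one SPR move from one TBR move, and hence (via the TBR characterization recalled in Lemma~\ref{lem:tbr_two_component}) a uSPR 2-agreement forest from an ordinary two-component agreement forest; making that distinction precise, together with the uniqueness statement invoked in ($\Rightarrow$) and the handful of degenerate small-$U$ cases that all force $T_1=T_2$, is where the care is needed.
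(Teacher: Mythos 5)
Your proof is correct and follows essentially the same route as the paper's: one direction identifies the shared two-component forest (with the common attachment node kept as the root) obtained by cutting the pruned edge in both trees, and the other direction reads off the single uSPR move from the two removed edges incident to the shared root, regrafting onto the edge of the unrooted component recorded by the suppressed endpoint. Your extra bookkeeping about suppressed nodes and the degenerate cases ($\epsilon_1=\epsilon_2$, tiny $U$) is a harmless refinement of the same argument.
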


\section{A time-optimal SPR graph construction algorithm}
\label{sec:algorithm}

In this section we present our $\OhOf{mn^2}$-time algorithm for the SPR Graph Construction Problem, which operates identically for either rooted and unrooted trees.
The cases of NNI and TBR are similar and addressed in the appendix\notarxiv{~\cite{whidden2017efficiently}}.
The basic idea of the algorithm is to use a new data structure, an AFContainer, to efficiently determine the pairwise SPR adjacencies of a collection of trees $\mathcal{T} = T_1, T_2, \ldots, T_m$.
We first \function{Insert} each tree into the AFContainer in turn and add a vertex corresponding to that tree to the graph.
We then apply the \function{SPRNeighbors} function of the AFContainer in turn for each tree to determine which edges to add to the graph.
The algorithm outputs the SPR graph with vertices labeled $i$ for each tree $T_i$ in $\mathcal{T}$.
We refer to labels as \emph{tree IDs}.

As shown in Lemmas~\ref{lem:rspr_two_component} and~~\ref{lem:uspr_two_component}, two distinct trees are adjacent in the SPR graph if and only if there exists a two-component forest that can be obtained by removing a single edge from both trees.
The AFContainer \function{Insert} function stores a string representation of each of the two-component rooted agreement forests corresponding to each inserted tree.
The AFContainer \function{SPRNeighbors} function then determines which of the previously inserted trees share an agreement forest with the given tree.
We define this data structure in Section~\ref{sec:data_structure}.

Our smallest descendant label Newick (SDLNewick) string representation is based on the venerable Newick tree format but has three important differences.
First, the SDLNewick format distinguishes between rooted and unrooted trees.
Second, the SDLNewick format can represent both trees and forests of trees.
Finally, SDLNewick representations of the same tree or forest are guaranteed to be the same, regardless of the left-right ordering of subtrees.
These features are necessary to easily determine whether two trees share a two-component agreement forest.
We define this string format in detail in Section~\ref{sec:string}.

The high-level steps of the algorithm are as follows:

\begin{enumerate}[label={\arabic*}.]
	\item[] \function{Construct-SPR-Graph($\mathcal{T}$)}
	\item Let $A \leftarrow$ \function{CreateAFContainer}().
	\item Let $G$ be an empty graph.
    \item For $i$ in $1$ to $m$: \vspace{-0.2em}
    \begin{enumerate}[nosep]
		\item Add a vertex $i$ to $G$ representing tree $T_i$.
    	\item $A$.\function{Insert}($T_i$).
    \end{enumerate}
    \item For $i$ in $1$ to $m$: \vspace{-0.2em}
	    \begin{enumerate}[nosep]
            \item Let $N \leftarrow$ $A$.\function{SPRNeighbors($T_i$)}.
            \item for each neighbor ID $j \in N$:
            \begin{enumerate}[nosep]
            	\item Add an edge $e = (j,i)$ to $G$.
            \end{enumerate}
		\end{enumerate}
    Return $G$.
\end{enumerate}

A key factor in achieving our time-optimal $\OhOf{n^2}$ running time bound is allowing and accounting for a small amount of sloppiness from the SPRNeighbors function.
First, we allow the function to return the neighbors of the current tree $T_i$ in an arbitrary order with respect to tree IDs.
Also, the function may return a small number of duplicate IDs caused by pairs of trees with the same agreement forest, at most $\OhOf{n}$ in total (as shown in the proof of Lemma~\ref{lem:neighbors} in the appendix\notarxiv{~\cite{whidden2017efficiently}}).
However, we must also be able to add each edge in constant time to achieve optimality.
To do so, our algorithm always adds edges pointing towards the current tree, $T_i$, in the second for loop.
This ensures that all of the edges starting from a given tree are added to the graph in sorted order with respect to their target.
We can thus add each edge to the end of the corresponding edge list in an adjacency list representation in constant time, even though the set of tree neighbors are not in sorted ID order (see the proof of Theorem~\ref{thm:spr_graph} for details).
Moreover, we can easily avoid adding duplicate edges when the SPRNeighbors function returns duplicate tree ID values.
This is a key requirement for avoiding a log factor in the running time of the algorithm to sort the edges and achieving a full linear speedup over previous algorithms for the graph construction problem.

We now show that this algorithm is correct and time-optimal.

\begin{restatable}{re-thm}{sprgraph}
\label{thm:spr_graph}
SPR Graph Construction can be solved in $\OhOf{mn^2}$-time.
\end{restatable}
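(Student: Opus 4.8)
The plan is to split the argument into a correctness part and a running-time part, using the two-component characterizations of Lemmas~\ref{lem:rspr_two_component} and~\ref{lem:uspr_two_component} together with the complexity guarantees of the AFContainer's \function{Insert} routine (Section~\ref{sec:data_structure}) and of \function{SPRNeighbors} (Lemma~\ref{lem:neighbors}), treating those as black boxes.

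For correctness I would first record that an $X$-tree on $n$ leaves (rooted or unrooted) has only $\OhOf{n}$ edges, and that deleting $k$ edges of a tree produces $k+1$ components while yielding never merges components; hence $T$ has at most $\OhOf{n}$ distinct two-component forests, each obtained by cutting one edge and yielding, and \emph{every} two-component forest of $T$ is of this form. Consequently, if two distinct trees $T_i,T_j$ differ by one SPR move then by Lemma~\ref{lem:rspr_two_component} (rooted) or Lemma~\ref{lem:uspr_two_component} (unrooted) they share a two-component (uSPR $2$-)agreement forest $F$, and $F$ occurs among the $\OhOf{n}$ forests enumerated from $T_i$ and among those from $T_j$; conversely, if two distinct trees share such a forest then $m(T_i,T_j)=2$ (at least $2$ since the trees differ, at most $2$ since $F$ witnesses it), so the same lemmas give $\dspr{T_i,T_j}=1$. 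Since \function{Insert} records the SDLNewick string of each such forest tagged by its tree ID, and SDLNewick strings of a forest agree regardless of subtree ordering (Section~\ref{sec:string}), \function{SPRNeighbors}$(T_i)$ returns exactly the IDs $j\ne i$ with $\dspr{T_i,T_j}=1$, up to the $\OhOf{n}$ spurious repeats bounded in Lemma~\ref{lem:neighbors}. It follows that the algorithm inserts the edge $\set{i,j}$ precisely when $T_i$ and $T_j$ are SPR-adjacent.

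For the running time I would bound the two loops in turn. The first loop makes $m$ calls to \function{Insert}, each running in $\OhOf{n^2}$ time and space (it builds and stores $\OhOf{n}$ SDLNewick strings of length $\OhOf{n}$, as shown for the AFContainer in Section~\ref{sec:data_structure}), for $\OhOf{mn^2}$ in total. The second loop makes $m$ calls to \function{SPRNeighbors}, each taking $\OhOf{n^2}$ time and returning $\OhOf{n^2}$ IDs by Lemma~\ref{lem:neighbors}; the point to nail down is that each resulting edge is added in $\OhOf{1}$. Here I would invoke the fact that during iteration $i$ every reported edge is oriented toward $T_i$, so for any fixed source $j$ the edges into $j$'s adjacency list are created with strictly increasing target across iterations and can be appended to the tail of that list in constant time, while a repeated report of $j$ inside iteration $i$ is detected in $\OhOf{1}$ by checking whether the current tail of $j$'s list already has target $i$ --- which it can only have if it was placed there in the present iteration, since all earlier iterations used targets smaller than $i$. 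Summing $\OhOf{n^2}$ over the $m$ iterations of each loop yields the $\OhOf{mn^2}$ bound.

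The correctness reduction is essentially free once the cited lemmas and the canonicity of SDLNewick are in hand, so I expect the delicate part to be the constant-time edge bookkeeping: without the ``orient every edge toward the current tree'' device one would have to sort the $\OhOf{n^2}$ unordered, duplicate-laden IDs produced for each tree, costing an extra $\log n$ factor and forfeiting optimality. Most of the write-up should therefore be spent making the append-to-tail invariant and the duplicate test precise and confirming that the total number of reported neighbours (genuine plus spurious) over all $m$ iterations is $\OhOf{mn^2}$. Finally, I would note in passing that the bound is tight: since a tree has $\ThetaOf{n^2}$ SPR neighbours~\cite{Song2003-gf}, an SPR graph on $m$ trees can have $\ThetaOf{mn^2}$ edges, so $\OmegaOf{mn^2}$ time is needed just to write the output.
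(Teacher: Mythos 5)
Your proposal is correct and follows essentially the same route as the paper's own proof: it reduces correctness to Lemmas~\ref{lem:rspr_two_component}, \ref{lem:uspr_two_component}, \ref{lem:insert} and~\ref{lem:neighbors}, and obtains the $\OhOf{mn^2}$ bound by the same device of orienting each reported edge toward the current tree so that appends to each adjacency list occur in increasing target order, with duplicates detected at the list tail in $\OhOf{1}$ time. The only additions (re-deriving that two-component forests arise from single edge cuts, and noting the $\OmegaOf{mn^2}$ output lower bound for optimality) are harmless elaborations of material the paper delegates to the cited lemmas and the introduction.
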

\begin{proof}
We first prove the running time bound.
We apply the above \function{Construct-SPR-Graph} algorithm to a collection of trees $\mathcal{T} = T_1, T_2, \ldots, T_m$.
We implement the graph as an adjacency list~\cite{clrs}.
We assume that vertices can be added to the graph and edges can be added to the end of a vertex's edge list in amortized $\OhOf{1}$-time.
This is possible if the edge lists are stored as an array of expandable sorted arrays and each of the graph vertices are indexed by tree IDs.

The algorithm first applies the \function{CreateAFContainer} function in constant time by Lemma~\ref{lem:create}.
In the first loop, the algorithm adds a vertex to the graph, and applies the \function{Insert} function once for each of the $m$ trees.
Adding a vertex to the graph takes constant time per tree.
By Lemma~\ref{lem:insert}, each insertion takes $\OhOf{n^2}$-time for a total of $\OhOf{mn^2}$-time.

In the second loop, the algorithm applies the \function{SPRNeighbors} function once for each of the $m$ trees.
By Lemma~\ref{lem:neighbors} this takes $\OhOf{n^2}$-time for each tree $i$, for a total of $\OhOf{mn^2}$-time.
The algorithm also adds an edge $(j,i)$ to the graph for each neighbor of each tree $i$.
Each tree has $\OhOf{n^2}$ SPR neighbors and by Lemma~\ref{lem:neighbors} each list of returned neighbors contains $\OhOf{n}$ duplicate values.
As we now argue, these edges are added to the end of tree $j$'s edge list, taking $\OhOf{1}$-time each for a total of $\OhOf{mn^2}$-time for all applications of the second loop.

An edge $(j, i)$ can only be added to the graph in the $i$th iteration of the for loop, thus an edge $e_i = (u, v_i)$ is added before any edge $e_j = (u, v_j)$ such that $v_i < v_j$.
The fact that $v_i < v_j$ implies that no such $e_j$ is in the graph when $e_i$ is added.
Thus, edges are always added to the end of an edge list, which takes $\OhOf{1}$-time to either add the edge or determine that the edge already exists.
Therefore the algorithm takes $\OhOf{mn^2}$-time.

Now we prove that the algorithm is correct, that is, the returned graph $G$ is exactly the graph of SPR adjacencies of $\mathcal{T}$.
In the first loop, the algorithm applies the \function{Insert} function once for each of the $m$ trees.
By Lemma~\ref{lem:insert}, this implies that the AFContainer contains each tree in $\mathcal{T}$ and their adjacencies.
The algorithm adds a vertex to $G$ for each tree, so the vertex set of $G$ is $\set{1, 2, \ldots, m}$.

In the second loop, the algorithm applies the \function{SPRNeighbors} function once for each of the $m$ trees.
By Lemma~\ref{lem:neighbors} each application returns the set of SPR neighbors of the corresponding tree $i$.
The algorithm then adds an edge $(j,i)$ to the graph for each neighbor of tree $i$.
We have already shown that the edges are added in sorted order to their respective edge lists.
We will now show that $G$ is exactly the SPR graph of $\mathcal{T}$.

First, suppose that the algorithm adds an edge $(x,y)$ between two trees in $\mathcal{T}$ that are not SPR neighbors.
As shown above, this must have occurred in the $y$th iteration of the second for loop.
However, by Lemma~\ref{lem:neighbors}, $T_x$ must be an SPR neighbor of $T_y$, a contradiction.

Second, suppose that the algorithm adds two or more copies of the same edge.
However, the edges are added in sorted order, so this cannot occur.

Finally, suppose that, when the algorithm terminates, $G$ does not contain an edge $(u,v)$ between two trees in $\mathcal{T}$ that are SPR neighbors.
Consider the $v$th iteration of the second for loop.
By Lemma~\ref{lem:neighbors} and the fact that $u$ and $v$ are SPR neighbors, the list of ID numbers returned by $A$.\function{SPRNeighbors($T_i$)} includes $u$.
Then the algorithm must have added edge $(u,v)$, a contradiction.
Therefore the returned graph $G$ is exactly the SPR Graph of $\mathcal{T}$.
\end{proof}

The difference between rooted and unrooted SPR operations is encapsulated in our data structure and SDLNewick string format.
Indeed, given a mixed set of rooted and unrooted trees the returned graph will contain the rooted adjacencies between rooted trees and the unrooted adjacencies between unrooted trees.

\section{A unique string representation for agreement forests}
\label{sec:string}

In this section, we develop an efficient method of uniquely representing agreement forests as a string of characters.
Numerous methods have been proposed to uniquely represent phylogenetic trees (e.g. ~\cite{whidden2016ricci,junier2010newick}), but none for agreement forests.
Moreover, phylogenetic tree string representations have not been examined formally to our knowledge so we do so here.
Our data structure in Section~\ref{sec:data_structure} compares agreement forests using such strings.

The essential properties of our representation for this use are that it must be: (1) space efficient, (2) quick to encode, (3) quick to decode, and (4) unique.

The standard Newick string format~\cite{felsenstein1990newick} for a rooted tree $T$ is defined recursively, starting at the root node $r$ of $T$.
The Newick format string begins with the label of $r$ (if any), followed by an opening parenthesis ``(''.
Each of the Newick strings for the subtrees rooted at $r$'s children are then appended to the string, separated by commas ``,''.
A closing parenthesis ``)'' is appended to the string to indicate that $r$ has no further children.
A complete Newick string is terminated with a semicolon ``;'', no semicolons are used recursively.

An unrooted tree is represented similarly to a rooted tree, by arbitrarily rooting the tree at an internal node.
If the original tree was binary, this results in a trifurcation at the root of the tree.

The Newick string format fulfills the first three essential properties, that is:
\begin{restatable}{re-lem}{newickproperties}
\label{lem:Newick_properties}
\begin{enumerate}[ref={\ref{lem:Newick_properties}.\arabic*}] \item \label{lem:Newick_properties:space} A Newick string of an $n$-leaf binary tree takes $\OhOf{n}$-space.
\item \label{lem:Newick_properties:encode} A Newick string of a binary tree can be encoded in $\OhOf{n}$-time, and
\item \label{lem:Newick_properties:decode} A Newick string can be decoded to its binary tree in $\OhOf{n}$-time.
\end{enumerate}
\end{restatable}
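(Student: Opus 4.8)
The plan is to prove the three parts directly from the recursive structure of the Newick format together with a matching stack-based parser, working throughout in the word-RAM model in which each integer label from $\set{1,\ldots,n}$ (recall $n \le 2^{64}-1$) occupies $\OhOf{1}$ machine words and can be read or written in $\OhOf{1}$-time. For Part~\ref{lem:Newick_properties:space}, I would first count structural elements: a binary $X$-tree on $n$ leaves (including $\rho$ in the rooted case) has $\OhOf{n}$ nodes and $\OhOf{n}$ edges, and the recursive definition emits exactly one integer token per leaf, one matched pair of parentheses per internal node, one comma per non-first child, and a single terminating semicolon. Hence the string is a sequence of $\OhOf{n}$ tokens of $\OhOf{1}$ size each, so it takes $\OhOf{n}$ space.

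For Part~\ref{lem:Newick_properties:encode}, I would encode by a single depth-first traversal realizing the recursive definition: at a leaf, append its label; at an internal node, append ``('', recurse on the children writing ``,'' between consecutive ones, then append ``)''; finally append ``;'' once at the end (for an unrooted tree, first choose an arbitrary internal node as the traversal root, yielding the usual trifurcation). Each node is handled once with $\OhOf{1}$ token appends, so with an expandable array or any amortized-$\OhOf{1}$ string builder for the output the traversal runs in $\OhOf{n}$-time.

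For Part~\ref{lem:Newick_properties:decode}, I would scan the string left to right maintaining a stack of ``open'' nodes: on ``('' push a fresh internal node, attaching it as a child of the current stack top (or as the root if the stack is empty); on a label token, create a leaf and attach it to the current top; on ``,'' do nothing; on ``)'' pop the stack, finalizing that node and its children; at ``;'' the stack must be empty and exactly one tree must have been produced. Every character triggers $\OhOf{1}$ work and $\OhOf{1}$ stack operations, and a label token is consumed in $\OhOf{1}$-time by the model assumption, so the $\OhOf{n}$ characters are processed in $\OhOf{n}$-time total; the same pass can check that each internal node received the right number of children, certifying that the decoded tree is binary.

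The counting in Part~\ref{lem:Newick_properties:space} and the traversal in Part~\ref{lem:Newick_properties:encode} are routine; the step needing the most care is the decoder, where I must argue that the stack discipline reconstructs exactly the tree the encoder would have produced from that string and that handling an integer label really is $\OhOf{1}$-time (which is where the $n \le 2^{64}-1$ word-RAM assumption is used). I expect this to be the main, albeit modest, obstacle.
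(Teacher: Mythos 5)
Your proposal is correct and follows essentially the same route as the paper's proof: counting the $\OhOf{1}$-size tokens contributed by leaves, internal nodes, and the terminator for the space bound, a single recursive traversal for encoding, and a one-pass parser (your explicit stack is just the iterative form of the paper's recursive construction) for decoding, with the $n \le 2^{64}-1$ assumption playing the same role of making each label constant-size that the paper's ``at most 20 characters per label'' bound plays.
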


The Newick string format is, however, not unique.
For any given rooted tree $T$, there are many different Newick string representations, one for each reordering of the children in the tree.
For example, the simple two leaf rooted tree with label set $\set{1,2}$ can be represented with both the Newick string ``(1,2);'' and the string ``(2,1);''.
Moreover, unrooted trees have different Newick string representations for each combination of arbitrary rooting choice and child order, and can not be distinguished from rooted trees with a multifurcation at the root.

To ensure a unique string representation of a given binary tree, we add stricter conditions that force a specific Newick string representation.
We call our variant the \emph{smallest descendant label Newick} string or SDLNewick.
In particular, we fix a unique ordering of children for each node of the tree and a unique rooting for an unrooted tree.
One easy to compute unique child ordering is achieved by sorting children by their smallest descendant label (e.g. ~\cite{whidden2016ricci,junier2010newick}).
The smallest descendant label of each node in the tree can be easily computed in $\OhOf{n}$-time by recursively determining the smallest descendant label of each of a node's children and then taking the minimum of those labels.
The nodes of a bounded degree tree (such as a typical binary tree) can then be reordered in $\OhOf{n}$ time.
The Newick string of the reordered tree will then be unique.
For an unrooted tree, we first root the tree at the internal node adjacent to the leaf with smallest label.
We label the root node of a rooted tree $\rho$ to distinguish between rooted and unrooted trees.
We refer to this procedure as \function{SDLNewick($T$)} in Section~\ref{sec:data_structure}.

\begin{restatable}{re-lem}{SDLNewickproperties}
\label{lem:SDLNewick_properties}
\begin{enumerate}[ref={\ref{lem:SDLNewick_properties}.\arabic*}] \item \label{lem:SDLNewick_properties:space} An SDLNewick string of a binary tree takes $\OhOf{n}$-space,
\item \label{lem:SDLNewick_properties:encode} An SDLNewick string of a binary tree can be encoded in $\OhOf{n}$-time,
\item \label{lem:SDLNewick_properties:decode} An SDLNewick string can be decoded to its binary tree in $\OhOf{n}$-time, and
\item \label{lem:SDLNewick_properties:unique} An SDLNewick string of a binary tree is unique.
\end{enumerate}
\end{restatable}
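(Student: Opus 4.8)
The plan is to obtain parts~\ref{lem:SDLNewick_properties:space}--\ref{lem:SDLNewick_properties:decode} directly from Lemma~\ref{lem:Newick_properties} by checking that the canonicalization preprocessing is cheap and structure-preserving, and to concentrate the real work on uniqueness. An SDLNewick string of $T$ is simply the ordinary Newick string of the tree $T'$ obtained from $T$ by re-rooting at a canonical node (in the unrooted case) and by reordering every node's children, together with the $\rho$ label that marks rooted trees. Re-rooting and reordering permute the nodes and relabel at most one node, so $T'$ has the same number of nodes as $T$; by Lemma~\ref{lem:Newick_properties:space} its Newick string has $\OhOf{n}$ characters, and since $n \le 2^{64}-1$ each label occupies $\OhOf{1}$ characters, giving part~\ref{lem:SDLNewick_properties:space}. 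For part~\ref{lem:SDLNewick_properties:encode} I would bound the pipeline described above stage by stage: a single post-order traversal computes every node's smallest descendant label with $\OhOf{1}$ work per node; sorting the at most three children of each node by this key is $\OhOf{1}$ per node; locating the smallest-label leaf and re-rooting at its unique internal neighbour is $\OhOf{n}$ in the unrooted case; and emitting the Newick string of $T'$ is $\OhOf{n}$ by Lemma~\ref{lem:Newick_properties:encode}, with attaching the $\rho$ label costing $\OhOf{1}$. For part~\ref{lem:SDLNewick_properties:decode}, an SDLNewick string is in particular a syntactically valid Newick string, so Lemma~\ref{lem:Newick_properties:decode} decodes it to its binary tree in $\OhOf{n}$ time; the presence or absence of the $\rho$ root label then tells us in $\OhOf{1}$ whether to read the result as rooted or unrooted (and the additional child-order constraints need not be checked to decode, though they can be verified in $\OhOf{n}$ by recomputing smallest descendant labels).

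The substance is part~\ref{lem:SDLNewick_properties:unique}. I would show that \function{SDLNewick} is both a well-defined function of $T$ (up to isomorphism) and injective, which is exactly the claim that the string uniquely represents the tree. The Newick format is produced by a deterministic recursion once a child order is fixed at every node, so well-definedness reduces to showing that the SDLNewick child order, and for unrooted trees the choice of root, are themselves canonical. The root is forced because the label set consists of distinct integers: there is a unique smallest-labelled leaf and it has a unique internal neighbour. The child order is forced by the observation that the leaf-label sets of the subtrees below a node's children partition the labels below that node and are therefore pairwise disjoint, hence have pairwise distinct minima; sorting children by smallest descendant label thus yields a strict total order, and an induction on subtree size shows the recursively assembled Newick string is determined by $T$ alone, with the $\rho$ label a fixed function of whether $T$ is rooted. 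Injectivity is then immediate from part~\ref{lem:SDLNewick_properties:decode}: two trees with the same SDLNewick string would decode to the same tree.

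The main obstacle is the careful bookkeeping inside part~\ref{lem:SDLNewick_properties:unique}: one must verify that the disjointness-of-sibling-label-sets argument forces a strict order at \emph{every} node, including the trifurcating root created when an unrooted tree is re-rooted, and must keep the rooted and unrooted cases strictly separate, since the $\rho$ marker is precisely what prevents a rooted tree and an unrooted tree from receiving the same string. Everything else is routine once Lemma~\ref{lem:Newick_properties} is in hand.
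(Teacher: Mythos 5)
Your proposal is correct and follows essentially the same route as the paper: parts 1--3 are reduced to the plain Newick lemma via the cheap canonicalization pipeline (smallest-label rerooting plus a post-order smallest-descendant-label reordering), and uniqueness is argued exactly as in the paper from distinct leaf labels forcing a unique rooting and from disjoint sibling label sets forcing a unique child order, with an induction over the tree. Your added remark that injectivity follows from decodability is a small supplement to the paper's determinism argument, not a different approach.
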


Finally, we extend SDLNewick to uniquely represent agreement forests of binary trees, our main result of this section.
Recall that these forests are obtained by removing an edge from a tree and suppressing the resulting degree two nodes.
If the same agreement forest can be obtained from two different trees then they are adjacent in the SPR graph.

Let $T$ be a binary tree with label set $X$ and let $F$ be a binary forest of $T$ such that $F = T_0, T_1, \ldots, T_k$.
Each component $T_i$ has label set $X_i$ and, as with agreement forests, $X = X_0 \cup X_1 \cup \ldots \cup X_k$, and $X_i \cap X_j = \emptyset$ for all $0 \le i \ne j \le k$.
We order the components of $F$ by their smallest label.
That is, $T_i < T_j$ if, and only if, $\min(X_i) < \min(X_j)$.
If $T$ is a rooted tree, then the label $\rho_0$ representing its root in $F$ is considered to be a label with a value smaller than all of the other leaf labels.
If any of the other components of $F$ are rooted trees then their roots are labeled $\rho$.
This $\rho$ is considered to be a label with value larger than all of the other leaf labels for the component ordering, but still the smallest label for the purpose of rooting that individual component.
We represent $F$ by appending the SDLNewick strings of its component trees separated by spaces, rather than semicolons, and end the string with a single semicolon.
We call the resulting string the SDLNewick representation of a forest.
We refer to this procedure as \function{SDLNewick($F$)} for use in Section~\ref{sec:data_structure}.
We show that this representation fulfills all four of our essential properties.

\begin{restatable}{re-lem}{SDLNewickAFproperties}
\label{lem:SDLNewick_AF_properties}
\begin{enumerate}[ref={\ref{lem:SDLNewick_AF_properties}.\arabic*}]
\item \label{lem:SDLNewick_AF_properties:space} An SDLNewick string representation of a binary forest takes $\OhOf{n}$-space,
\item \label{lem:SDLNewick_AF_properties:encode} An SDLNewick string representation of a binary forest can be encoded in $\OhOf{n}$-time,
\item \label{lem:SDLNewick_AF_properties:decode} An SDLNewick string can be decoded to its binary forest in $\OhOf{n}$-time, and
\item \label{lem:SDLNewick_AF_properties:unique} An SDLNewick string representation of a binary forest is unique.
\end{enumerate}
\end{restatable}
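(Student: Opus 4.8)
The plan is to reduce every one of the four claims to the corresponding per-component property from Lemma~\ref{lem:SDLNewick_properties}, treating the component ordering and the special root annotations $\rho_0$ and $\rho$ as bounded-cost bookkeeping. Write $F = T_0, T_1, \ldots, T_k$ with $|X_i| = n_i$; since the label sets $X_0, \ldots, X_k$ partition $X$ we have $\sum_i n_i = n$, and each component carries at most one extra root annotation ($\rho_0$ for the component of the original root, $\rho$ for any other rooted component, none for an unrooted component). For the space bound, each component SDLNewick string has length $\OhOf{n_i}$ by Lemma~\ref{lem:SDLNewick_properties:space}, and the forest string is the concatenation of these strings together with $k \le n$ separating spaces and one terminating semicolon, for a total length of $\sum_i \OhOf{n_i} + \OhOf{k} = \OhOf{n}$.

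For encoding, I would first run one bottom-up pass over each component to compute the smallest descendant label of every node; this costs $\OhOf{n}$ time in total and also yields the ordering key of each component, where $\rho_0$ counts as smaller than every label in $X$ and $\rho$ as larger, so that the key of the root component is $\rho_0$ while the key of any other component $T_i$ is the smallest label of $X_i$. Because these keys are distinct integers in $\set{0, 1, \ldots, n}$ (encoding $\rho_0$ as $0$), the components can be placed in the required order by a counting sort in $\OhOf{n}$ time, avoiding an $\OhOf{k \log k}$ comparison sort. Applying the tree encoder of Lemma~\ref{lem:SDLNewick_properties:encode} to each component in $\OhOf{n_i}$ time and concatenating the results in sorted order with spaces and a terminating semicolon then produces the forest string in $\OhOf{n}$ total time. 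For decoding, one left-to-right scan splits the string at the space separators and the final semicolon into the per-component substrings in time linear in its length, hence $\OhOf{n}$ by the space bound just shown; each substring is a well-formed SDLNewick tree string — rooted exactly when it carries a $\rho$ or $\rho_0$ label at its root — and is decoded in $\OhOf{n_i}$ time by Lemma~\ref{lem:SDLNewick_properties:decode}, and the collection of decoded components is $F$, recovered in $\OhOf{n}$ time overall.

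For uniqueness, observe that the forest string is determined entirely by the sequence of component strings and the order in which they are concatenated. Each component string is unique by Lemma~\ref{lem:SDLNewick_properties:unique}, and the concatenation order is the ordering of the components by their keys, which is a total order uniquely determined by $F$: the keys that are labels of $X$ are pairwise distinct because the $X_i$ are pairwise disjoint, and the $\rho_0$ convention pins the root component first. Hence the same forest always yields the same string; and since the decoding procedure above inverts the encoding, distinct forests yield distinct strings.

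The only genuine subtlety — and the point I would be most careful about — is the dual role of the symbol $\rho$: it must be treated as the largest possible label when ordering the components of $F$, yet as the smallest possible label when fixing the rooting and the child order inside the component that contains it. One has to verify that these two conventions govern disjoint steps of the construction (the inter-component sort versus the intra-component \function{SDLNewick} call) and hence never conflict, which is what makes the forest string well defined; everything else then follows routinely from the single-tree case.
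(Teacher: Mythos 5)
Your proposal is correct and follows essentially the same route as the paper's proof: reduce each claim to the per-component properties of Lemma~\ref{lem:SDLNewick_properties}, order the components by smallest label via a counting sort in $\OhOf{n}$ time, and derive uniqueness from the distinctness of the component keys under the $\rho_0$/$\rho$ conventions. Your explicit remark that the two roles of $\rho$ (largest for inter-component ordering, smallest for intra-component rooting) govern disjoint steps is a nice clarification, but it is the same argument the paper makes implicitly.
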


We close this section by stressing that our SDLNewick string representation applies equally to three types of forests relevant to phylogenetic distance metrics.
By Lemmas~\ref{lem:tbr_two_component},~\ref{lem:rspr_two_component}, and~\ref{lem:uspr_two_component}, two trees are adjacent in the (1) TBR, (2) rooted SPR, or (3) unrooted SPR graphs if and only if they share a two-component forest such that (1) neither component is rooted, (2) both components are rooted, or (3) only the moved component is rooted.

%
%

\section{An efficient data structure for agreement forests}
\label{sec:data_structure}

In this section we introduce our \textbf{AFContainer} data structure for storing and comparing SPR tree adjacencies using agreement forests.
Trees inserted into the AFContainer are given successive unique integer ID numbers starting from 0.
An AFContainer consists of three substructures: the forest trie, the ID trie, and the tree array. 
The \textbf{forest trie} is a trie indexed by SDLNewick forest strings.
Each string represents an agreement forest that can be obtained by removing a single edge of some trees inserted into the AFContainer.
Recall that two trees are adjacent in the SPR graph if, and only if, they share a two-component agreement forest.
The forest trie stores lists of the IDs of those trees.
The \textbf{ID trie} is a trie indexed by SDLNewick tree strings that maps tree strings to tree IDs.
The \textbf{tree array} is an expandable array that maps tree IDs to SDLNewick tree strings.


The data structure supports five main operations.
The \textbf{CreateAFContainer()} function creates a new empty AFContainer. This operation initializes the forest trie, the ID trie, and the tree array.
The \textbf{Insert($T$)} function inserts a tree $T$ into the AFContainer, storing all of the agreement forests that can be obtained by removing any single edge of $T$.
The \textbf{SPRNeighbors($T$)} function finds the IDs of each of the neighbors of a tree $T$ that have been inserted into the AFContainer.
The \textbf{ID($T$)} function returns the integer ID of a tree $T$.
The \textbf{SDLNewick($I$)} function returns the SDLNewick string of the tree with ID $I$.
We present pseudocode for these functions and prove their running time and space properties.

We begin with the comparatively simple \function{CreateAFContainer}() function.

\vspace{-0.5em}
\begin{enumerate}[label={\arabic*}.]
	\item[] \function{CreateAFContainer}()
	\item Create an AFContainer $A$.
	\item Initialize $A$.ForestTrie.
	\item Initialize $A$.IDTrie.
    \item Initialize $A$.TreeArray.
    \item Return $A$.
\end{enumerate}
\vspace{-1em}

\begin{restatable}{re-lem}{create}
\label{lem:create}
An empty AFContainer can be created in constant time.
\end{restatable}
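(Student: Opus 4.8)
The plan is to unpack the definition of an AFContainer and observe that \function{CreateAFContainer}() performs only a constant number of elementary initialization steps, none of which touches any actual data. Recall that an AFContainer consists of exactly three substructures: the forest trie, the ID trie, and the tree array. The pseudocode above allocates a container record $A$, initializes each of these three substructures in turn, and then returns $A$. So it suffices to show that each of the four numbered lines of \function{CreateAFContainer}() runs in $\OhOf{1}$-time.

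First I would argue that allocating the container record $A$ itself is a single fixed-size allocation and hence takes constant time. Next, I would observe that initializing an empty trie---whether the forest trie or the ID trie---amounts to creating a single root node with an empty child map and no stored value; under the standard assumption that allocating a fixed-size node is an $\OhOf{1}$ operation, each of lines 2 and 3 is constant time. Finally, initializing the tree array as an empty expandable array consists of allocating a constant-size backing buffer and recording a length of zero, which is again $\OhOf{1}$-time (trivially so, or amortized against later insertions). Since \function{CreateAFContainer}() executes a constant number of such constant-time steps and then returns a pointer to $A$, the entire procedure runs in constant time, establishing the lemma.

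The only point requiring care---and really the only ``obstacle'' here---is to make the underlying machine model explicit: we assume that allocating a fixed-size block of memory (for the container record, for a trie root node, and for the initial array buffer) costs $\OhOf{1}$, consistent with the word-RAM assumptions already used elsewhere in the paper (e.g.\ that an $n$-leaf label fits in a machine word). With that assumption stated, the claim is immediate, because an empty AFContainer stores no SDLNewick strings and no tree IDs, so nothing beyond the constant-size skeletons of its three substructures needs to be created.
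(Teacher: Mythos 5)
Your proposal is correct and matches the paper's own (much terser) argument: both simply observe that \function{CreateAFContainer}() initializes two empty tries and an empty expandable array, each a constant-time operation. The extra care you take in spelling out the machine-model assumption is fine but not something the paper itself bothers with.
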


We now present pseudocode for the \function{Insert}($T$) function.
We prove that it takes $\OhOf{n^2}$-time amortized and correctly inserts $T$.

\vspace{-0.5em}
\begin{enumerate}[label={\arabic*}.]
	\item[] \function{Insert}($T$)
	\item Let $I$ be the number of trees in $A$.TreeArray.
    \item Let $S \leftarrow $ \function{SDLNewick}($T$).
    \item If $A$.IDTrie[$S$] exists:
   	\begin{enumerate}[nosep]
    	\item Return.
    \end{enumerate}
    \item Let $A$.IDTrie[$S] \leftarrow I$.
    \item Let $A$.TreeArray[$I$] $\leftarrow S$.
	\item For each edge $e$ of $T$:
    \begin{enumerate}[nosep]
	    \item Let $F \leftarrow $\function{SDLNewick}($T \div e$).
	    \item Add $I$ to $A$.ForestTrie[$F$], creating the list if necessary.
    \end{enumerate}
    \item Return.
\end{enumerate}
\vspace{-1em}

We require three conditions of the insert function given a tree $T$.
After \function{Insert}($T$) returns,

\vspace{-0.5em}
\begin{enumerate}
\item $A$.IDTrie[\function{SDLNewick}($T$)] is a unique integer $I$,
\item $A$.TreeArray[$I$] is \function{SDLNewick}($T$), and
\item For each edge $e$ of $T$, $A$.ForestTrie[$F$] is a list that contains $I$ exactly once, where $F = $ \function{SDLNewick}($T \div e$).
\end{enumerate}
\vspace{-1em}

\begin{restatable}{re-lem}{insert}
\label{lem:insert}
A binary tree can be inserted into an AFContainer in amortized $\OhOf{n^2}$ time.
\end{restatable}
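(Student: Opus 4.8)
The plan is to prove Lemma~\ref{lem:insert} in two parts: first the amortized $\OhOf{n^2}$ running time, by accounting line-by-line for the \function{Insert}($T$) pseudocode, and second that the function is correct, i.e.\ that after it returns the three post-conditions on $A$.IDTrie, $A$.TreeArray, and $A$.ForestTrie stated just above the lemma hold. All the $\OhOf{n}$-time string work will be charged to Lemmas~\ref{lem:SDLNewick_properties} and~\ref{lem:SDLNewick_AF_properties}, and the word ``amortized'' in the statement is exactly what is needed to absorb the doubling costs of the expandable tree array and of the expandable ID lists stored in the forest trie.

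For the running time, steps~1--6 cost $\OhOf{n}$ in total: step~2 builds $S = $ \function{SDLNewick}($T$), a string of length $\OhOf{n}$ by Lemma~\ref{lem:SDLNewick_properties:space}, in $\OhOf{n}$ time by Lemma~\ref{lem:SDLNewick_properties:encode}; the ID-trie lookup of step~3 and the insertion of step~4 each touch one trie node per character of $S$, hence $\OhOf{n}$ time; and step~5 stores $S$ at index $I$ of the expandable tree array in amortized $\OhOf{n}$ time. The loop of step~6 runs once per edge of $T$, of which there are $\OhOf{n}$; in each iteration I form the two-component forest $T \div e$ by deleting $e$ from (a copy of) $T$ and suppressing the resulting degree-two node(s), which is $\OhOf{n}$ time, then encode it as $F = $ \function{SDLNewick}($T \div e$), a string of length $\OhOf{n}$, in $\OhOf{n}$ time by Lemma~\ref{lem:SDLNewick_AF_properties:encode}, then look $F$ up in (and if necessary insert it into) the forest trie in $\OhOf{n}$ time, and finally append $I$ to the associated list in amortized $\OhOf{1}$ time. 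Hence each iteration is amortized $\OhOf{n}$ and the loop is amortized $\OhOf{n^2}$, which dominates the whole call; and if $T$ is already present then step~3 succeeds and the call returns after only $\OhOf{n}$ work, still within the bound.

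For correctness I would split on whether $T$ has been inserted before. If $A$.IDTrie[$S$] exists then, by uniqueness of SDLNewick strings (Lemma~\ref{lem:SDLNewick_properties:unique}), this entry was created by an earlier insertion of the tree whose SDLNewick string is $S$, and that tree is $T$ itself, so all three invariants already hold and the early return of step~3 is correct. Otherwise $T$ is new and $I$ equals the current size of the tree array, a value never before used as an ID; steps~4 and~5 then map $S$ to $I$ in the ID trie and store $S$ at index $I$ of the tree array, giving conditions~1 and~2. For condition~3 the loop adds $I$ once to $A$.ForestTrie[\function{SDLNewick}($T \div e$)] for every edge $e$ of $T$, and since $I$ is fresh it lies in no list beforehand, so it suffices to check that the strings $F$ are pairwise distinct over the edges of $T$. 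This holds because the label sets of the two components of $T \div e$ form a bipartition of $X$ (a clade and its complement, in the rooted case) which pins down the edge $e$ uniquely within a tree; hence distinct edges yield distinct forests, and distinct forests yield distinct SDLNewick strings by Lemma~\ref{lem:SDLNewick_AF_properties:unique}.

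The time accounting is routine once Lemmas~\ref{lem:SDLNewick_properties} and~\ref{lem:SDLNewick_AF_properties} and the standard $\OhOf{\ell}$ cost of a trie operation on a length-$\ell$ string are invoked. The only genuinely delicate point is the ``exactly once'' clause of condition~3 --- the distinctness of the forests $T \div e$ over the edges of $T$ --- together with correctly tracking, when $T$ is rooted, which component inherits the $\rho$ marker (and hence is itself rooted), and being disciplined about stating the bound as amortized rather than worst-case because of the dynamic arrays and lists. I expect that bookkeeping to be the main, and only mildly delicate, obstacle.
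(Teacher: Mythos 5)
Your proposal is correct and follows essentially the same route as the paper's proof: the same line-by-line time accounting via Lemmas~\ref{lem:SDLNewick_properties} and~\ref{lem:SDLNewick_AF_properties} plus $\OhOf{n}$ trie operations, and the same correctness argument via the case split on prior insertion, freshness of $I$, and distinctness of the forests $T \div e$ over edges combined with SDLNewick uniqueness (Lemma~\ref{lem:SDLNewick_AF_properties:unique}). Your only addition is the explicit bipartition argument for why distinct edges give non-isomorphic forests, which the paper simply asserts.
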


We now present pseudocode for the \function{SPRNeighbors}($T$) function.
We prove that it takes $\OhOf{n^2}$-time and correctly returns all of the ID numbers of neighbors of a tree $T$ that have been inserted into the AFContainer.
Note that there are two limitations of this function that enable us to achieve this running time bound.
First, the neighbor ID numbers are not returned in sorted order.
Second, the list of neighbors will include some duplicate ID values, but only at most $\OhOf{n}$ such duplicates.
This occurs because some pairs of trees share two or more two-component agreement forests.
Neither of these limitations affect our use of this function in Section~\ref{sec:algorithm}.
Moreover, note that both of these limitations can be removed with a sorting pass for use in other applications, for a total of $\OhOf{n^2 \log n}$-time.
We discuss this idea further in Section~\ref{sec:conclusions}.

\vspace{-0.5em}
\begin{enumerate}[label={\arabic*}.]
	\item[] \function{SPRNeighbors}($T$)
	\item Let $I \leftarrow -1$.
    \item If $A$.IDTrie[\function{SDLNewick}($T$)] exists:
    \begin{enumerate}[nosep]
		\item Let $I \leftarrow A$.IDTrie[\function{SDLNewick}($T$)].
    \end{enumerate}
    \item Let $L$ be an empty list of integers.
	\item For each edge $e$ of $T$:
    \begin{enumerate}[nosep]
	    \item Let $F \leftarrow $\function{SDLNewick}($T \div e$).
	    \item If the list $A$.ForestTrie[$F$] is nonempty, append its non-$I$ elements to $L$.
    \end{enumerate}
    \item Return $L$.
\end{enumerate}
\vspace{-1em}

\begin{restatable}{re-lem}{neighbors}
\label{lem:neighbors}
The SPR neighbors of a binary tree $T$ that are stored in an AFContainer can be identified in $\OhOf{n^2}$-time with $\OhOf{n}$ duplicates.
\end{restatable}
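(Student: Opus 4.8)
The plan is to check three claims about \function{SPRNeighbors}$(T)$: that the distinct entries of the returned list $L$ are exactly the IDs of the stored trees that are SPR neighbors of $T$; that it runs in $\OhOf{n^2}$ time; and that $L$ has at most $\OhOf{n}$ duplicate entries. For an edge $e$ of $T$ write $F_e$ for the \function{SDLNewick} string of $T \div e$. By the postconditions of \function{Insert} (Lemma~\ref{lem:insert}), after any sequence of insertions the list $A$.ForestTrie[$F$] contains the ID of a stored tree $T_j$ once for each edge $e'$ of $T_j$ with \function{SDLNewick}$(T_j \div e') = F$, and nothing else. Hence, if the edge-$e$ iteration appends some $j \ne I$ to $L$, then $T_j$ is stored, $T_j \ne T$ (distinct IDs mean distinct trees, by Lemma~\ref{lem:insert}), and the two-component forest represented by $F_e$ is a common forest of $T$ and $T_j$, so by Lemma~\ref{lem:rspr_two_component} or~\ref{lem:uspr_two_component} (according to whether the trees are rooted or unrooted) $T_j$ is an SPR neighbor of $T$. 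Conversely, if a stored $T_j$ is an SPR neighbor of $T$, then $T_j \ne T$, and by those lemmas $T$ and $T_j$ share a two-component forest $F$ with $F = T \div e$ for some edge $e$ of $T$ and $F = T_j \div e'$ for some edge $e'$ of $T_j$; by uniqueness of \function{SDLNewick} (Lemma~\ref{lem:SDLNewick_AF_properties:unique}) the string $F_e$ coincides with the one \function{Insert}$(T_j)$ used for edge $e'$, so $j$ was placed in $A$.ForestTrie[$F_e$], and the edge-$e$ iteration appends it to $L$. Thus the distinct entries of $L$ are precisely the IDs of the stored SPR neighbors of $T$.

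For the running time, $T$ has $\OhOf{n}$ edges, and for each edge $e$ the algorithm forms $T \div e$ (delete $e$, suppress the at most two new degree-two nodes) and the string $F_e$ in $\OhOf{n}$ time by Lemma~\ref{lem:SDLNewick_AF_properties:encode}, then looks up $F_e$, of length $\OhOf{n}$, in the ForestTrie in $\OhOf{n}$ time. Scanning $A$.ForestTrie[$F_e$] and appending its non-$I$ entries costs time proportional to that list's length; but every tree whose ID can occur in it is obtained by regrafting the pruned component of the forest $T \div e$ onto one of the $\OhOf{n}$ edges of its other component, and since distinct edges of a binary tree produce distinct two-component forests, each such tree contributes at most one ID. So each list has length $\OhOf{n}$, and over all $\OhOf{n}$ edges the total is $\OhOf{n^2}$.

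For the duplicate bound, the correctness analysis shows that a stored neighbor $T_j$ is appended to $L$ exactly once for each two-component forest shared by $T$ and $T_j$; writing $c(T,T_j)$ for the number of such shared forests, the number of duplicate entries in $L$ is $\sum_{T_j}(c(T,T_j)-1)$, summed over stored neighbors. I would control this with the following bookkeeping: over all SPR neighbors $T_j$ of $T$ (stored or not), $\sum_{T_j} c(T,T_j)$ is exactly the number of SPR moves on $T$ whose result differs from $T$ — each edge $e$ fixes the pruned component, each regraft of it onto the other component yields a distinct tree, and a neighbor $T_j$ is obtained once per shared forest. Hence the number of duplicates is at most the gap between the (elementary) count of SPR moves applicable to $T$ and the number of distinct SPR neighbors of $T$.

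The main obstacle is the remaining purely combinatorial fact that this gap is $\OhOf{n}$. Both the number of SPR moves and the number of distinct SPR neighbors are $\ThetaOf{n^2}$~\cite{Song2003-gf}, so the claim is that two distinct SPR moves from $T$ give the same tree only $\OhOf{n}$ times in total; I would prove it by showing that only $\OhOf{n}$ neighbors of $T$ are reachable by more than one SPR move — equivalently, share more than one two-component agreement forest with $T$ — namely those reachable by an NNI-type move localized at one of the $\OhOf{n}$ internal edges of $T$, each contributing $\OhOf{1}$ such coincidences; in particular $c(T,T_j) = \OhOf{1}$ for every neighbor $T_j$. The unrooted case follows from the same count applied to uSPR $2$-agreement forests via Lemma~\ref{lem:uspr_two_component}.
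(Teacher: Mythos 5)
Your argument follows the paper's proof essentially step for step: the same correctness argument via the postconditions of \function{Insert} (Lemma~\ref{lem:insert}), Lemmas~\ref{lem:rspr_two_component} and~\ref{lem:uspr_two_component}, and uniqueness of SDLNewick forest strings; the same time accounting of $\OhOf{n}$ edges at $\OhOf{n}$ each, with each ForestTrie list of length $\OhOf{n}$; and the same reduction of the duplicate count to the fact that a neighbor reachable by two or more distinct SPR moves must be an NNI neighbor, contributing $\OhOf{1}$ coincidences each and hence $\OhOf{n}$ duplicates in total. The only difference is that the combinatorial fact you flag as the ``main obstacle'' is not proved from scratch in the paper either --- it is invoked from Whidden and Matsen~\cite{whidden2016ricci} --- so your plan is complete once that citation stands in for your sketched argument.
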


Note that the \function{SPRNeighbors} function returns a list of tree IDs rather than the SDLNewick strings of neighboring trees.
This is necessary to achieve an $\OhOf{n^2}$ time bound, as the $\OhOf{n}$ size of each such string implies that a list of strings for all $\ThetaOf{n^2}$ neighbors is of size $\ThetaOf{n^3}$.
Our algorithm in Section~\ref{sec:algorithm} thus uses these tree IDs directly.
With a bounded neighborhood size, however, the neighbor strings can be output more efficiently:

\begin{restatable}{re-lem}{newickneighbors}
\label{lem:Newick_neighbors}
A list of the SPR neighbors of a binary tree $T$ that are stored in an AFContainer can be returned in SDLNewick format in $\OhOf{n^2 + Xn}$-time, where $X$ is the number of neighbors.
\end{restatable}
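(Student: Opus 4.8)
The plan is to reduce this to the \function{SPRNeighbors} function: first obtain the list of neighbor IDs, then translate each ID into the corresponding SDLNewick tree string via the tree array. First I would call $A$.\function{SPRNeighbors}($T$): by Lemma~\ref{lem:neighbors} this costs $\OhOf{n^2}$-time and returns a list $L$ containing each of the $X$ stored neighbors of $T$, together with at most $\OhOf{n}$ duplicate entries, so $\size{L} = X + \OhOf{n}$.

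Next I would turn these IDs into strings. For each ID $j$ occurring in $L$, I would append $A$.\function{SDLNewick}($j$), i.e.\ the string stored at $A$.TreeArray[$j$], to the output list; copying an $\OhOf{n}$-character SDLNewick tree string (Lemma~\ref{lem:SDLNewick_properties}) takes $\OhOf{n}$-time, so this phase costs $\OhOf{(X+n)n} = \OhOf{Xn + n^2}$-time in total. If a duplicate-free output is required, I would first remove the repeated IDs from $L$: since every tree has only $\OhOf{n^2}$ SPR neighbors we have $\size{L} = \OhOf{n^2}$, and each ID fits in $\OhOf{1}$ machine words, so a radix sort followed by a linear scan does this in $\OhOf{n^2}$-time (alternatively one may simply retain the $\OhOf{n}$ duplicates, whose extra string copies add only $\OhOf{n^2}$-time). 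Summing the phases gives $\OhOf{n^2 + Xn}$-time, as claimed; note this is essentially optimal, since just writing $X$ strings of size $\OhOf{n}$ already requires $\OhOf{Xn}$-time.

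Correctness would follow from two facts already established: Lemma~\ref{lem:neighbors} says $L$ is exactly the multiset of IDs of the stored SPR neighbors of $T$, and invariant~(2) of \function{Insert} (Lemma~\ref{lem:insert}) says $A$.TreeArray[$j$] equals \function{SDLNewick}($T_j$) for each inserted tree $T_j$; hence the output is precisely the list of SDLNewick strings of the stored neighbors of $T$.

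I expect the only delicate part to be the running-time accounting rather than any structural argument: one must verify that the $\OhOf{n}$ spurious duplicates emitted by \function{SPRNeighbors} cannot push the cost past $\OhOf{n^2 + Xn}$, and that each ID-to-string conversion genuinely costs only $\OhOf{n}$ (which it does, by the $\OhOf{n}$ size bound on SDLNewick tree strings). Everything else is a direct composition of the \function{SPRNeighbors}, \function{SDLNewick}, and \function{Insert} guarantees.
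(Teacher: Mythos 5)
Your proposal is correct and follows essentially the same route as the paper: call \function{SPRNeighbors} ($\OhOf{n^2}$-time by Lemma~\ref{lem:neighbors}) and then convert each returned ID to its stored SDLNewick string in $\OhOf{n}$-time each, giving $\OhOf{n^2 + Xn}$ overall. You are in fact slightly more careful than the paper's proof, which does not explicitly account for the $\OhOf{n}$ duplicate IDs; your observation that their extra string copies cost only $\OhOf{n^2}$ (or can be removed by a linear-time dedup) cleanly closes that small gap.
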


It is often necessary to determine whether a given tree has been inserted into a data structure and, if so, obtain its identifier.
We now present pseduocode for the AFContainer \function{ID} function. We prove that it takes $\OhOf{n}$-time to find the ID of a tree that has been inserted into the AFContainer or determine that a tree was not previously inserted into the AFContainer.

\vspace{-0.5em}
\begin{enumerate}[label={\arabic*}.]
	\item[] \function{ID}($T$)
	\item Let $S \leftarrow $\function{SDLNewick}($T$).
    \item If $A$.IDTrie[$S$] exists:
    \begin{enumerate}[nosep]
    	\item Return the ID $I$.
    \end{enumerate}
    \item Else:
    \begin{enumerate}[nosep]
    	\item Return $-1$.
    \end{enumerate}
\end{enumerate}
\vspace{-1em}

\begin{restatable}{re-lem}{index}
\label{lem:index}
The ID of a binary tree $T$ in SDLNewick format can be found or determined not to be in an AFContainer in $\OhOf{n}$-time.
\end{restatable}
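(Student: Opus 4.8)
The plan is to read the running time straight off the three-line pseudocode, invoking the SDLNewick string lemmas for the costly steps, and then to verify correctness from the invariant maintained by \function{Insert}. First I would bound line~1: by Lemma~\ref{lem:SDLNewick_properties:encode}, computing $S \leftarrow \function{SDLNewick}(T)$ takes $\OhOf{n}$-time, and by Lemma~\ref{lem:SDLNewick_properties:space} the string $S$ has length $\OhOf{n}$, where I count each taxon label (which fits in a single machine word by the standing assumption $n \le 2^{64}-1$) together with each Newick delimiter as one character.

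Next I would argue that the lookup of a length-$\ell$ string in the ID trie, together with retrieval of the stored value, costs $\OhOf{\ell}$-time: descending the trie follows one edge per character of $S$, and each descent step is $\OhOf{1}$ — for delimiter characters because that alphabet has constant size, and for label characters because each trie node can store its children in a dictionary supporting $\OhOf{1}$ (expected) lookup by key. Hence line~2, the test ``$A$.IDTrie[$S$] exists'', costs $\OhOf{n}$-time, while returning the stored ID or returning $-1$ is $\OhOf{1}$. Summing, \function{ID}($T$) runs in $\OhOf{n}$-time.

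For correctness, I would appeal to the conditions guaranteed by \function{Insert}: after any sequence of insertions, $A$.IDTrie[$\function{SDLNewick}(T')$] is defined and equal to the unique ID assigned to $T'$ for every inserted tree $T'$, and no IDTrie entry is created except by such an insertion. By Lemma~\ref{lem:SDLNewick_properties:unique}, $\function{SDLNewick}$ assigns distinct binary trees distinct strings, so $A$.IDTrie[$S$] is defined if and only if $T$ itself was previously inserted, in which case its value is exactly the ID of $T$. Therefore \function{ID}($T$) returns the ID of $T$ when $T$ has been inserted, and $-1$ otherwise, as claimed. I expect the only delicate point to be the model-of-computation bookkeeping around the trie — committing to unit-cost label characters and to a per-node child dictionary with $\OhOf{1}$ access, so that the linear bound is not quietly inflated by a logarithmic factor; everything else is immediate from the SDLNewick properties already established.
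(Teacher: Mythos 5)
Your proposal is correct and follows essentially the same route as the paper's proof: bound the cost of computing the SDLNewick string via Lemmas~\ref{lem:SDLNewick_properties:encode} and~\ref{lem:SDLNewick_properties:space}, charge the trie lookup linearly in the key length, and derive correctness from the \function{Insert} invariants together with the uniqueness of SDLNewick strings (Lemma~\ref{lem:SDLNewick_properties:unique}). The only difference is that you spell out the trie cost model (per-character descent, unit-cost label handling) in more detail than the paper, which simply assumes an $\OhOf{k}$-time lookup for a key of length $k$; this is a harmless refinement, not a different argument.
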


Similarly, we may need to determine which tree corresponds to a given ID.
It takes $\OhOf{n}$-time to return the SDLNewick string of a tree given its ID number.

\vspace{-0.5em}
\begin{enumerate}[label={\arabic*}.]
	\item[] \function{SDLNewick}($I$)
	\item If $A$.TreeArray[$I$] exists:
    \begin{enumerate}[nosep]
    	\item Return the stored SDLNewick string $S$.
    \end{enumerate}
    \item Else:
    \begin{enumerate}[nosep]
    	\item Return the empty string $``''$.
    \end{enumerate}
\end{enumerate}
\vspace{-1em}

\begin{restatable}{re-lem}{newick}
\label{lem_Newick}
The SDLNewick string corresponding to a tree with ID $I$ can be found in $\OhOf{n}$-time.
\end{restatable}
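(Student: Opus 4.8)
The plan is to observe that \function{SDLNewick}($I$) performs nothing more than a single bounds-checked array access followed by returning a stored string, so that its running time is dominated by the length of that string. First I would recall that $A$.TreeArray is an expandable array indexed directly by the contiguous tree IDs $0, 1, 2, \ldots$ that \function{Insert} assigns. Consequently the test ``$A$.TreeArray[$I$] exists'' is just a comparison of $I$ against the current length of the array, taking $\OhOf{1}$-time, and the retrieval of the entry $A$.TreeArray[$I$] is a constant-time random access.

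Next I would appeal to the \function{Insert} maintenance invariant from Lemma~\ref{lem:insert}: when a tree $T$ is assigned ID $I$, the entry $A$.TreeArray[$I$] is set to \function{SDLNewick}($T$) and is never subsequently overwritten (later insertions only append new entries). Hence, if some tree has ID $I$, the string returned is exactly its SDLNewick string; and if no tree has ID $I$, the bounds check fails and the function returns the empty string, as specified. This establishes correctness.

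For the running time, I would invoke Lemma~\ref{lem:SDLNewick_properties:space}: the stored SDLNewick string of an $n$-leaf binary tree has length $\OhOf{n}$, so copying it out to the caller costs $\OhOf{n}$-time, while returning the empty string costs $\OhOf{1}$-time. Adding the $\OhOf{1}$ lookup cost to the $\OhOf{n}$ output cost yields the claimed $\OhOf{n}$-time bound. There is no real obstacle here — the only point worth stressing is that the constant-time access genuinely relies on TreeArray being indexed by tree ID rather than searched, a design choice already built into the AFContainer; one could even return the string by reference in $\OhOf{1}$-time, but we state the $\OhOf{n}$ bound since the caller generally needs an independent copy of the string.
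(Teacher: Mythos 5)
Your proof is correct and follows essentially the same route as the paper's: a constant-time lookup of the integer-keyed entry in the expandable TreeArray, correctness via the \function{Insert} invariant (as in Lemma~\ref{lem:index}), and an $\OhOf{n}$ cost to return the $\OhOf{n}$-size stored string. The only cosmetic difference is that you cite the SDLNewick size bound (Lemma~\ref{lem:SDLNewick_properties:space}) where the paper cites the plain Newick one (Lemma~\ref{lem:Newick_properties:space}); both suffice.
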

\vspace{-0.5em}

For our final proof of the basic AFContainer operations, we show that the total space required by an AFContainer holding $m$ trees with at most $n$ leaves is $\OhOf{mn^2}$.

\begin{restatable}{re-lem}{afcontainerspace}
\label{lem:afcontainer_space}
An AFContainer holding $m$ trees requires $\OhOf{mn^2}$ space.
\end{restatable}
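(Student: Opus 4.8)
The plan is to bound separately the space used by each of the three substructures of an AFContainer holding $m$ trees of at most $n$ leaves — the tree array, the ID trie, and the forest trie — and to observe that the forest trie dominates, contributing $\OhOf{mn^2}$, while the other two contribute only $\OhOf{mn}$ each.

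First I would handle the two easy substructures. The tree array stores one SDLNewick tree string per inserted tree; by Lemma~\ref{lem:SDLNewick_properties:space} each such string has size $\OhOf{n}$, and with geometric resizing the array has $\OhOf{m}$ slots, so the tree array uses $\OhOf{mn}$ space. The ID trie is keyed by the (at most $m$) distinct SDLNewick tree strings inserted, each of length $\OhOf{n}$ by Lemma~\ref{lem:SDLNewick_properties:space}, and stores one integer ID per key. Over the bounded alphabet of SDLNewick symbols (the decimal digits together with the tokens ``('', ``)'', ``,'', ``;'', the space separator, and the root marker $\rho$) each trie node holds $\OhOf{1}$ child pointers, so the trie has $\OhOf{1}$ overhead per symbol of the inserted keys and hence occupies $\OhOf{mn}$ space; the stored IDs add only $\OhOf{m}$ more. (If one instead treats each integer label as a single symbol over an $\OhOf{n}$-symbol alphabet, the same bound follows using hashed child maps.)

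Now for the crux, the forest trie. When \function{Insert}($T$) is called on a tree with $n$ leaves it iterates over the $\OhOf{n}$ edges $e$ of $T$ and inserts the key \function{SDLNewick}($T \div e$), which by Lemma~\ref{lem:SDLNewick_AF_properties:space} has length $\OhOf{n}$; so one call inserts $\OhOf{n}$ keys of total length $\OhOf{n^2}$, and over all $m$ trees the forest trie receives keys of total length $\OhOf{mn^2}$. As with the ID trie, the number of trie nodes is proportional to this total key length — distinct keys only decrease it, and re-inserting an already-present key creates no new node but merely appends to a tree-ID list — so the trie structure occupies $\OhOf{mn^2}$ space. For the ID lists at the nodes: each inserted tree appends its ID to at most one list per edge, so the lists together hold $\OhOf{mn}$ entries of $\OhOf{1}$ words each. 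Summing, the AFContainer uses $\OhOf{mn} + \OhOf{mn} + \OhOf{mn^2} = \OhOf{mn^2}$ space.

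The only step needing care is the trie-node accounting, which relies on the standard fact that a trie over a bounded alphabet (or, for a large alphabet, a trie with hashed child maps) uses space linear in the total length of its keys. In particular one must note that repeated keys necessarily arise in the forest trie — two trees that are adjacent in the SPR graph produce, by Lemmas~\ref{lem:rspr_two_component} and~\ref{lem:uspr_two_component}, an identical two-component SDLNewick forest string — yet such a repetition adds no trie node and only lengthens the ID list at one existing node, so the node count stays $\OhOf{mn^2}$.
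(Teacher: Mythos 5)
Your proof is correct and follows essentially the same accounting as the paper's: per inserted tree, an $\OhOf{n}$ IDTrie key, an $\OhOf{n}$ TreeArray entry, and $\OhOf{n}$ ForestTrie keys of length $\OhOf{n}$ each, justified by Lemmas~\ref{lem:SDLNewick_properties:space} and~\ref{lem:SDLNewick_AF_properties:space}. The paper merely packages this summation as an induction on the number of insertions, whereas you sum per substructure and spell out the bounded-alphabet trie-node accounting and the duplicate-key observation that the paper leaves implicit.
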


\section{Conclusions}
\label{sec:conclusions}

We developed the first time-optimal algorithms for the \defproblem{SPR Graph Construction Problem} and \defproblem{TBR Graph Construction Problem}, and the first efficient algorithm for the \defproblem{NNI Graph Construction Problem}, given $m$ phylogenetic trees with $n$ leaves.

The key insight behind these algorithms was storing and manipulating agreement forests of trees rather than the trees themselves.
To do so, we introduced a new SDLNewick string representation for representing agreement forests, and an AFContainer data structure that stores and compares such strings.
SDLNewick strings are efficient to construct and process and allow one to easily determine whether two trees or agreement forests are the same.
Although there have been many such representations for trees, ours is the first that uniquely distinguishes between rooted and unrooted trees and uniquely represents agreement forests.



The AFContainer is the first efficient method of identifying a large number of adjacencies between evolutionary trees.
We wish to stress that the AFContainer data structure can also be used dynamically, for example to update the graph given a stream of trees.

There are several avenues to explore in future work.
First, our data structure does not currently allow for the deletion of trees.
It may be useful to identify and delete trees that are unlikely with respect to the sequence data as a search progresses to reduce the memory required by the AFContainer.
Second, although a major advance, our algorithm for constructing NNI graphs is not time-optimal as trees have only $\OhOf{n}$ NNI-neighbors.
Closing this gap is an open problem.
Third, although our SPR and TBR graph algorithms are time-optimal in the sense that they match the maximum size of each graph given $n$ and $m$, they do not necessarily match the size of a given graph.
Developing an output-sensitive algorithm which runs in time proportional to the actual size of the constructed graph is a challenge and would be very useful for testing and developing new phylogenetic methods.
Finally, it remains to implement our data structure and apply it to the testing and development of current and new phylogenetic methods.

\bibliographystyle{plainurl}
\bibliography{main}

\begin{thebibliography}{10}

\bibitem{allen01}
Benjamin~L Allen and Mike Steel.
\newblock Subtree transfer operations and their induced metrics on evolutionary
  trees.
\newblock {\em Annals of Combinatorics}, 5(1):1--15, 2001.

\bibitem{bordewich2005computational}
Magnus Bordewich and Charles Semple.
\newblock On the computational complexity of the rooted subtree prune and
  regraft distance.
\newblock {\em Annals of Combinatorics}, 8(4):409--423, 2005.

\bibitem{bouckaert2014beast}
Remco Bouckaert, Joseph Heled, Denise K{\"u}hnert, Tim Vaughan, Chieh-Hsi Wu,
  Dong Xie, Marc~A Suchard, Andrew Rambaut, and Alexei~J Drummond.
\newblock {BEAST} 2: a software platform for {Bayesian} evolutionary analysis.
\newblock {\em PLoS Computational Biology}, 10(4):e1003537, 2014.

\bibitem{clrs}
Thomas~H. Cormen, Clifford Stein, Ronald~L. Rivest, and Charles~E. Leiserson.
\newblock {\em Introduction to Algorithms}.
\newblock McGraw-Hill Higher Education, 2nd edition, 2001.

\bibitem{dasgupta97computing}
Bhaskar DasGupta, Xin He, Tao Jiang, Ming Li, John Tromp, and Louxin Zhang.
\newblock On computing the nearest neighbor interchange distance.
\newblock In {\em Proceedings of the DIMACS Workshop on Discrete Problems with
  Medical Applications}, pages 125--143. Press, 1997.

\bibitem{felsenstein1990newick}
Joseph Felsenstein, J~Archie, WHE Day, W~Maddison, C~Meacham, F~Rohlf, and
  D~Swofford.
\newblock The {Newick} tree format, 1990.

\bibitem{hein1990reconstructing}
Jotun Hein.
\newblock Reconstructing evolution of sequences subject to recombination using
  parsimony.
\newblock {\em Mathematical Biosciences}, 98(2):185--200, 1990.

\bibitem{hickey2008sdc}
Glenn Hickey, Frank Dehne, Andrew Rau-Chaplin, and Christian Blouin.
\newblock {SPR} distance computation for unrooted trees.
\newblock {\em Evolutionary Bioinformatics}, 4:17--27, 2008.

\bibitem{hohna2012guided}
Sebastian H{\"o}hna and Alexei~J Drummond.
\newblock Guided tree topology proposals for {Bayesian} phylogenetic inference.
\newblock {\em Systematic Biology}, 61(1):1--11, 2012.

\bibitem{junier2010newick}
Thomas Junier and Evgeny~M Zdobnov.
\newblock The {Newick} utilities: high-throughput phylogenetic tree processing
  in the {UNIX} shell.
\newblock {\em Bioinformatics}, 26(13):1669--1670, 2010.

\bibitem{lakner2008efficiency}
Clemens Lakner, Paul Van Der~Mark, John~P Huelsenbeck, Bret Larget, and Fredrik
  Ronquist.
\newblock Efficiency of {Markov chain Monte Carlo} tree proposals in {Bayesian}
  phylogenetics.
\newblock {\em Systematic Biology}, 57(1):86--103, 2008.

\bibitem{maddison1991discovery}
David~R Maddison.
\newblock The discovery and importance of multiple islands of most-parsimonious
  trees.
\newblock {\em Systematic Biology}, 40(3):315--328, 1991.

\bibitem{Price2010-fi}
Morgan~N Price, Paramvir~S Dehal, and Adam~P Arkin.
\newblock {FastTree} 2--approximately maximum-likelihood trees for large
  alignments.
\newblock {\em PLoS One}, 5(3):e9490, 2010.

\bibitem{Ronquist2012-hi}
Fredrik Ronquist, Maxim Teslenko, Paul van~der Mark, Daniel~L Ayres, Aaron
  Darling, Sebastian H{\"{o}}hna, Bret Larget, Liang Liu, Marc~A Suchard, and
  John~P Huelsenbeck.
\newblock {MrBayes} 3.2: efficient {Bayesian} phylogenetic inference and model
  choice across a large model space.
\newblock {\em Systematic Biology}, 61(3):539--542, 2012.
\newblock \href {http://dx.doi.org/10.1093/sysbio/sys029}
  {\path{doi:10.1093/sysbio/sys029}}.

\bibitem{Song2003-gf}
Yun~S Song.
\newblock On the combinatorics of rooted binary phylogenetic trees.
\newblock {\em Annals of Combinatorics}, 7(3):365--379, 2003.

\bibitem{Stamatakis2006-yz}
Alexandros Stamatakis.
\newblock {RAxML-VI-HPC}: maximum likelihood-based phylogenetic analyses with
  thousands of taxa and mixed models.
\newblock {\em Bioinformatics}, 22(21):2688--2690, 2006.
\newblock \href {http://dx.doi.org/10.1093/bioinformatics/btl446}
  {\path{doi:10.1093/bioinformatics/btl446}}.

\bibitem{swofford2001paup}
David~L Swofford.
\newblock {PAUP*}: Phylogenetic analysis using parsimony (and other methods)
  4.0. b5.
\newblock 2001.

\bibitem{whidden2013fixed}
Chris Whidden, Robert~G Beiko, and Norbert Zeh.
\newblock Fixed-parameter algorithms for maximum agreement forests.
\newblock {\em SIAM Journal on Computing}, 42(4):1431--1466, 2013.

\bibitem{whidden2015calculating}
Chris Whidden and Frederick~A. Matsen~IV.
\newblock Calculating the unrooted subtree prune-and-regraft distance.
\newblock {\em arXiv preprint arXiv:1511.07529}, 2015.

\bibitem{whidden2015quantifying}
Chris Whidden and Frederick~A. Matsen~IV.
\newblock Quantifying {MCMC} exploration of phylogenetic tree space.
\newblock {\em {Systematic Biology}}, 64(3):472--491, 2015.

\bibitem{whidden2016ricci}
Chris Whidden and Frederick~A. Matsen~IV.
\newblock {Ricci-Ollivier} curvature of the rooted phylogenetic
  subtree-prune-regraft graph.
\newblock {\em Proceedings of the {Thirteenth Workshop on Analytic Algorithmics
  and Combinatorics (ANALCO16)}}, pages 106--120, 2016.

\end{thebibliography}

\newpage

\arxiv{

\appendix

\section{Omitted Proofs}

\usprtwocomponent*
\begin{proof}
Let $T_1$ and $T_2$ be two distinct unrooted trees.

First suppose that there exists such a forest $F = T_1 \div E_1 = T_2 \div E_2$.
Then $E_1$ contains a single edge $e_1$ and $E_2$ contains a single edge $e_2$.
Consider the two components $t_1$ and $t_2$ of $F$, such that $t_1$ is the unrooted component and $t_2$ the rooted component.
Then $e_1$ and $e_2$ are attached to the same node of $t_2$ in both $T_1$ and $T_2$.
We can thus denote the edges $e_1 = (u, v)$ and $e_2 = (u, v')$.
Let $y$ and $z$ be the other neighbors of $v'$ in $T_2$.
Then we can transform $T_1$ into $T_2$ by applying the SPR operation that cuts $e_1$ in $T_1$, introduces the node $v'$ on the edge $(y,z)$ and then connects $u$ and $v'$.

Now, suppose that there exists an SPR operation that transforms $T_1$ into $T_2$ by cutting an edge $(u,v)$, introducing a node $v'$ and adding the edge $(u, v')$.
Then the forest of $T_1$ with rooted component $T_u$ and unrooted component $T_v$ is a forest of $T_1$ and $T_2$ and thus a uSPR 2-agreement forest of $T_1$ and $T_2$.
\end{proof}

\newickproperties*
\begin{proof}
These properties are well known but we are not aware of any proofs that have appeared in scholarly work so we briefly argue their correctness here.
We first consider property (1).
A rooted binary tree with $n$ leaves has $n+1$ internal nodes, each with two children.
By the recursive Newick definition, each internal node adds 3 characters to the format, an opening parenthesis, comma, and closing parenthesis.
Each leaf node adds its label which, by our assumptions on reasonable $n$ takes at most 20 characters.
Finally, the string is terminated by 1 semicolon character.
The Newick representation of an $n$ leaf string thereby consists of at most $3 (n+1) + 20n + 1 = 23n + 4$ characters.

For property (2), we observe that it takes constant time to apply the definition recursively to each node of the tree, so the Newick string can be encoded in linear time.
Similarly, for property (3), a tree can be constructed in linear time by recursively processing a Newick string with a well known algorithm.
Briefly, this consists of creating a new node for each opening parenthesis as a child to the previous node, labeling leaf nodes with the integer labels, returning to the previous parent node when reaching a comma or closing parenthesis, and terminating this procedure when the semicolon is reached.
\end{proof}

\SDLNewickproperties*
\begin{proof}
Let $T$ be a binary tree.
We first observe that an SDLNewick string representation of $T$ is a valid Newick string, as it is the Newick string representation of some reordering of $T$'s edges.
Thus, property (1) follows from Lemma~\ref{lem:Newick_properties:space} and property (3) follows from Lemma~\ref{lem:Newick_properties:decode}.

We next consider property (2).
Let $T'$ be the smallest descendant label reordering of $T$.
By Lemma~\ref{lem:Newick_properties:encode}, we can encode $T'$ to the SDLNewick string representation of $T$ in $\OhOf{n}$-time.
We now show that we can construct $T'$ from $T$ in $\OhOf{n}$-time.
If $T$ is unrooted then we first compute the smallest label of $T$.
This takes $\OhOf{n}$-time to traverse $T$, applying a constant number of operations to each node.
We then reroot $T$ at the internal node adjacent to that leaf.
This also takes $\OhOf{n}$-time to set the root node and then traverse the tree, setting parent pointers from each node.

The final step in constructing $T'$ is determining the child edge reordering and reordering the children.
To do so, we apply a recursive post-order traversal starting at the root of $T$ that (1) determines the smallest descendant label of a node by taking the minimum of the smallest descendant label of each of its children, (2) determines the new child ordering by comparing their smallest descendant labels and (3) reorders the children.
This process applies a constant number of operations per node of the tree.
Thus, $T'$ can be constructed in $\OhOf{n}$-time, and property (2) holds.

Last we show that property (4) holds.
In particular, we show that the above procedure is fully deterministic, that is, two applications will result in the same SDLNewick string given any starting child order of a tree $T$.
By our assumption on tree labels, every leaf has a distinct label.
Thus, there is a unique smallest label, and therefore a unique smallest label rooting if $T$ is unrooted.
The above procedure identifies this unique rooting and applies it.
Moreover, a label cannot be the descendant of two nodes with the same parent, so every node of the tree with the same parent must have a unique smallest descendant label.
Thus, each node of the tree has a unique smallest descendant label child ordering.
It is easy to see using induction that the above procedure identifies this unique ordering and applies it.
Therefore the SDLNewick string representation of $T$ is unique.
\end{proof}

\SDLNewickAFproperties*
\begin{proof}
Let $T$ be a binary tree and $F = T_0, T_1, \ldots, T_k$ be a binary forest of $T$.
Let $S$ be an SDLNewick string representation of $F$.

We first prove property (1).
As noted above, $S$ is the concatenation of SDLNewick strings of each component of $F$ which have been permuted, with each semicolon but the last replaced by space characters.
In other words, $S = $``$\function{SDLNewick}(T_{\pi_0})\ \function{SDLNewick}(T_{\pi_1})\ \ldots\ \allowbreak \function{SDLNewick}(T_{\pi_k})$;'' where $\Pi = \pi_0, \pi_1, \ldots, \pi_k$ is a permutation of the component numbers.
By Lemma~\ref{lem:Newick_properties:space}, $S$ is of size $\OhOf{\size{X_{\pi_0}}} + \OhOf{\size{X_{\pi_1}}} + \ldots + \OhOf{\size{X_{\pi_k}}} = \OhOf{\size{X_0}} + \OhOf{\size{X_1}} + \ldots + \OhOf{\size{X_k}} = \OhOf{n}$.
Thus, the first claim holds.

We now prove property (2).
Let $F'$ be the smallest descendant label reordering of the components of $F$.
By applying Lemma~\ref{lem:SDLNewick_properties:encode} to each component of $F'$, we can encode $F'$ to $S$.
This takes $\OhOf{\size{X_{\pi_0}}} + \OhOf{\size{X_{\pi_1}}} + \ldots + \OhOf{\size{X_{\pi_k}}} = \OhOf{n}$-time.
It thus suffices to show that we can construct $F'$ from $F$ in $\OhOf{n}$-time to prove property (2).
We traverse each component of $F$ in order to determine its smallest label, storing the results in an array of size $k + 1 = \OhOf{n}$.
This takes $\OhOf{\size{X_0}} + \OhOf{\size{X_1}} + \ldots + \OhOf{\size{X_k}} = \OhOf{n}$-time.
We then apply CountingSort~\cite{clrs} to sort the components by their smallest label in $\OhOf{k + n} = \OhOf{n}$-time.

Property (3) follows from the structure of $S$ in a similar fashion to property (1).
We convert each of the space characters in $S$ to semicolons and apply Lemma~\ref{lem:SDLNewick_properties:decode} to each Newick string to construct a binary forest from $S$.
This takes $\OhOf{\size{X_0}} + \OhOf{\size{X_1}} + \ldots + \OhOf{\size{X_k}} = \OhOf{n}$-time.

Finally, we prove property (4).
As in the proof of Lemma~\ref{lem:SDLNewick_properties:unique}, we show that the above encoding procedure is fully deterministic.
By Lemma~\ref{lem:SDLNewick_properties:unique}, the string representation of each component of $F$ is unique.
It thus suffices to show that the component ordering is unique.
By our assumption on tree labels, every leaf has a distinct label with the possible exception of artificial labels $\rho_0$ and $\rho$.
Moreover, only one component can have the $\rho_0$ label that indicates the root of $T$.
Finally, $\rho$ labels have an ordering value larger than any other label.
Thus, each component of the forest has a distinct smallest label, and therefore the smallest label ordering is unique.
\end{proof}

\create*
\begin{proof}
The \function{CreateAFContainer}() function simply initalizes three empty data structures, two tries and an expandable array.
This takes constant time.
\end{proof}

\insert*
\begin{proof}
We first show that each step of the \function{Insert} function other than the for loop can be implemented to take at most $\OhOf{n}$-time amortized over a series of insert operations.
It takes constant time to determine the previous size of the TreeArray, and thereby obtain the new tree index $I$.
By Lemma~\ref{lem:SDLNewick_properties:encode}, it takes $\OhOf{n}$-time to construct the SDLNewick representation $S$ of $T$.
It takes $\OhOf{k}$-time to determine if an entry with key length $k$ exists in a trie.
By Lemma~\ref{lem:SDLNewick_properties:space}, $S$ is of length $\OhOf{n}$.
Thus, it takes $\OhOf{n}$-time to determine if $S$ is already a key in the IDTrie and, if so, terminate the function.
Similarly, it takes $\OhOf{n}$-time to insert $I$ into the IDTrie with key $S$.
$I$ is the next empty element of the TreeArray.
Thus, it takes constant amortized time to insert $S$ into the TreeArray.

Now, consider the for loop.
A tree with $n$ leaves has $\OhOf{n}$ edges, so there are $\OhOf{n}$ iterations of the loop.
The first step of each iteration takes $\OhOf{n}$-time, constant time to remove the edge $e$ and suppress any resulting degree 2 nodes, and linear time to generate the SDLNewick string $F$ (by Lemma~\ref{lem:SDLNewick_AF_properties:encode}).
The second step of each iteration also takes $\OhOf{n}$-time, linear time to determine if the list $A$.ForestTrie[$F$] already exists, linear time to create and insert it into the ForestTrie if it does not, and constant time to add $I$ to the list.
There are a linear number of iterations, each taking at most linear time, so the function can be implemented to take $\OhOf{n^2}$-time.

Finally, we show that the \function{Insert} function is correct, that is, after the function returns, all three correctness conditions hold.
We assume inductively that the conditions hold for any prior tree inserted into the AFContainer.
Recall that the SDLNewick representation of a tree is unique by Lemma~\ref{lem:SDLNewick_properties:unique}.
First, suppose that a tree equivalent to $T$ has been inserted previously.
Then all three conditions already hold prior to applying \function{Insert($T$)}.
Thus $A$.IDTrie[$S$] exists and the function correctly terminates without modifying the AFContainer.

Now, suppose that no tree equivalent to $T$ has been inserted previously.
Then the ID $A$.IDTrie[\function{SDLNewick($T$)}] cannot exist.
Thus the function will assign a new index $I$ to $A$.IDTrie[$S$].
The fact that the chosen value of $I$ is equal to the number of trees in the tree array implies that the index is one larger than any previous index and must be unique.
This fulfills the first condition.
The function then sets $A$.TreeArray[$I$] to \function{SDLNewick($T$)}, fulfilling the second condition.
Finally, suppose that the third condition does not hold when the function terminates.
Then there exists an edge $e$ of $T$ such that the list $A$.ForestTrie[$F$] does not contain $I$ or contains two or more values of $I$, where $F = $ \function{SDLNewick}($T \div e$).
The function considers each edge $e$ of $F$, so the list must contain $I$ at least once.
Furthermore, the function considers each edge exactly once, and no two forests obtained from $T$ by removing different edges can be isomorphic.
Thus, by Lemma~\ref{lem:SDLNewick_AF_properties:unique} no two such forests have the same SDLNewick representation.
Therefore the list contains $I$ exactly once and the function is correct.
\end{proof}

\neighbors*
\begin{proof}

We first show that the algorithm is correct when applied to a binary tree $T$.
Consider the list $L$ returned by \function{SPRNeighbors}($T$).
We will show that $L$ contains the ID numbers of every SPR neighbor of $T$ that has been inserted into the AFContainer and does not contain any other values or more than an $\OhOf{n}$ number of duplicate values.

First, suppose that there exists a tree $T'$ that is an SPR neighbor of $T$ such that $T'$ was inserted into the AFContainer with index $I'$ but $I' \notin L$.
By Lemma~\ref{lem:rspr_two_component} and~\ref{lem:uspr_two_component}, the fact that $T$ and $T'$ are neighbors imply that there exists a forest $F = T \div e = T' \div e'$ where $e$ and $e'$ are edges of $T$ and $T'$, respectively.
Then, by Lemma~\ref{lem:insert}, the list $A$.ForestTrie[\function{SDLNewick}($T' \div e')]$ exists and contains $I'$.
Moreover, we have that \function{SDLNewick}($T' \div e'$) = \function{SDLNewick}($T \div e$) by Lemma~\ref{lem:SDLNewick_AF_properties:unique}.
Then $A$.ForestTrie[\function{SDLNewick}($T \div e)$] exists and contains $I'$.
Thus, \function{SPRNeighbors}($T$) must have appended $I'$ to $L$, a contradiction.

Now, suppose that $L$ contains an integer $I'$ that is not the ID number of an SPR neighbor of $T$ that has been inserted into the AFContainer.
Note that the function only appends non-$I$ values to $L$ from lists in $A$.ForestTrie.
By Lemma~\ref{lem:insert}, $I'$ must be the ID number of a tree $T'$ that has been inserted into the AFContainer.
Moreover, \function{SDLNewick}($T \div e$) must be equal to \function{SDLNewick}($T' \div e'$), where $e$ and $e'$ are edges of $T$ and $T'$, respectively.
However, by Lemma~\ref{lem:rspr_two_component} and~\ref{lem:uspr_two_component}, this implies that $T$ and $T'$ are SPR neighbors, a contradiction.

Finally, suppose that $L$ contains an integer $I$ corresponding to a tree $T'$ two or more times.
We will show that there are $\OhOf{n}$ such duplicate integers.
By Lemma~\ref{lem:insert}, no single list from $A$.ForestTrie contains two or more of the same value.
Then there must exist two distinct forests $T \div e$ and $T \div e'$ such that both lists $A$.ForestTrie[\function{SDLNewick}($T \div e$)] and $A$.ForestTrie[\function{SDLNewick}($T \div e'$)] contain $I$.
That is, $T$ can be transformed into $T'$ by two or more different SPR moves.
Whidden and Matsen~\cite{whidden2016ricci} showed that this occurs if and only if $T$ and $T'$ are also NNI neighbors and that these different moves correspond exactly to NNI moves on $T$.
There are $\OhOf{n}$ NNI moves on $T$.
Therefore there are $\OhOf{n}$ duplicate values in $L$.

We now show that the \function{SPRNeighbors}($T$) function takes $\OhOf{n^2}$-time.
It takes constant time to initialize an empty list.
$T$ has $\OhOf{n}$ edges, so the for loop applies $\OhOf{n}$ iterations.
We will show that each iteration takes linear time, for a total of $\OhOf{n^2}$-time.

It takes linear time to copy $T$ and then constant time to remove $e$ from the copy and suppress degree two nodes in order to construct $T \div e$.
By Lemma~\ref{lem:SDLNewick_AF_properties:encode}, it takes linear time to construct the SDLNewick string $F$ from $T \div e$.
It takes linear time to retrieve a list pointer from a trie with a key of length $\OhOf{n}$.
There are $\OhOf{n}$ trees with the same two-element agreement forest, and no ForestTrie list contains the same tree ID value twice by Lemma~\ref{lem:insert}.
Thus, the list contains $\OhOf{n}$ elements, which are added to $L$ in $\OhOf{n}$-time.
Therefore the running time of the function is $\OhOf{n^2}$ as claimed.
\end{proof}

\newickneighbors*
\begin{proof}
We apply the \function{SPRNeighbors}($T$) function to obtain a list $L$ containing the tree IDs of $T'$ neighbors from the AFContainer.
We then simply apply the AFContainer \function{SDLNewick} function to each ID number to obtain a list $L'$ containing the SDLNewick representations of $T$'s neighbors.
By Lemma~\ref{lem:neighbors}, the first step correctly returns the list of $X$ tree IDs in $\OhOf{n^2}$-time.
By Lemma~\ref{lem:index}, the second step correctly obtains the SDLNewick representations of those trees, using $\OhOf{n}$-time per tree for a total of $\OhOf{Xn}$-time.
Thus, the total time required is $\OhOf{n^2 + Xn}$.
\end{proof}

\index*
\begin{proof}
First, assume that $T$ has been inserted into the AFContainer previously
By Lemma~\ref{lem:insert}, $A$.IDTrie[$S$] contains the ID $I$ of $T$ and the function returns it.
Now, assume that $T$ has not been inserted into the AFContainer previously.
The IDTrie only matches trees with the same SDLNewick string as trees that have been inserted.
Along with Lemma~\ref{lem:SDLNewick_properties:unique}, this implies that the function returns $-1$ indicating that $T$ is not in the AFContainer.

By Lemma~\ref{lem:SDLNewick_properties:encode}, it takes $\OhOf{n}$-time to compute $S$.
By Lemma~\ref{lem:SDLNewick_properties:space}, $S$ is of $\OhOf{n}$ size, so it also takes linear time to look up $S$ in the IDTrie.
All other operations take constant time, so the function takes $\OhOf{n}$-time overall.
\end{proof}

\newick*
\begin{proof}
The correctness of the algorithm follows by similar arguments to those in the proof of Lemma~\ref{lem:index}.
The running time bound follows by noting that it takes constant time to look up an integer-keyed value in an expandable array and $\OhOf{n}$-time to return the (by Lemma~\ref{lem:Newick_properties:space}) $\OhOf{n}$-size string.
\end{proof}

\afcontainerspace*
\begin{proof}
We prove the bound by induction on $m$.
Assume that the claim is true for any number of insertion operations $m' < m$.
Then, after $m - 1$ insertion operations the AFContainer takes $c_0(m-1)n^2$ space, where $c_0 > 0$ is a constant.

Consider the $m$th insertion operation, \function{Insert}($T$).
Let $I$ be the new ID for $T$ and $S = $ \function{SDLNewick}($T$).
We note again that $S$ takes $\OhOf{n}$-size by Lemma~\ref{lem:SDLNewick_properties:space}.
The Insert function increases the space used by the AFContainer in three ways, (1) adding $I$ to the IDTrie with key $S$, (2) adding $S$ to the TreeArray with key $I$, and (3) adding the two-component agreement forests obtained from $T$ to the ForestTrie.
Adding an integer value to a trie with a key of length $\OhOf{n}$ adds $\OhOf{n} \le c_1n$ space, for a constant $c_1 > 0$.
Adding a string value of length $\OhOf{n}$ to an expandable array requires $\OhOf{n} \le c_2n$ space, for a constant $c_2 > 0$.
There are $\OhOf{n}$ edges of $T$ and hence $\OhOf{n}$ updates to the ForestTrie.
By Lemma~\ref{lem:SDLNewick_AF_properties:space},  each new ForestTrie key is of length $\OhOf{n}$.
Therefore these updates cumulatively take $\OhOf{n^2} \le c_3n^2$ space, for a constant $c_3 > 0$.
Then the increase in the space used by the AFContainer is $c_1n + c_2n + c_3n^2$.

Let $c = \max(c_0, 3c_1, 3c_2, 3c_3)$.
The total space used by the AFContainer after applying the $mth$ insertion operation is then $c_0(m-1)n^2 + c_1n + c_2n + c_3n^2 \le c(m-1)n^2 + (c/3)n + (c/3)n + (c/3)n^2 \le c(m-1)n^2 + cn^2 \le cmn^2$.
Therefore the total space used by the AFContainer is $\OhOf{n^2}$.
\end{proof}

\section{Fast algorithms for the NNI and TBR graph construction problems}
\label{sec:algorithm_extended}

In this section we show how to modify our algorithm from Section~\ref{sec:algorithm} to construct NNI and TBR Graphs.
We again have a collection of trees $\mathcal{T} = T_1, T_2, \ldots, T_m$.

We first consider the NNI Graph Construction Problem, and show that it can also be solved in $\OhOf{mn^2}$-time.
The basic idea of the algorithm remains the same, to \function{Insert} each tree into the AFContainer in turn, add a vertex corresponding to that tree to the graph, and then add the edges to the graph.
NNI operations are a subset of SPR operations, so we can use the same \function{Insert} function that we used in the \function{Construct-SPR-Graph} algorithm.
We then apply the \function{NNINeighbors} function (see Section~\ref{sec:data_structure_extended}) in turn to each tree to determine which edges to add to the graph.

The high-level steps are as follows:

\vspace{0.5em}
\function{Construct-NNI-Graph($\mathcal{T}$)}
\vspace{-1em}
\begin{enumerate}[label={\arabic*}.]
	\item Let $A \leftarrow$ \function{CreateAFContainer}().
	\item Let $G$ be an empty graph.
    \item For $i$ in $1$ to $m$:
    \begin{enumerate}[nosep]
		\item Add a vertex $i$ to $G$ representing tree $T_i$.
    	\item $A$.\function{Insert}($T_i$).
    \end{enumerate}
    \item For $i$ in $1$ to $m$:
	    \begin{enumerate}[nosep]
            \item Let $N \leftarrow$ $A$.\function{NNINeighbors($T_i$)}.
            \item for each neighbor ID $n \in N$:
            \begin{enumerate}[nosep]
            	\item Add an edge $e = (n,i)$ to $G$.
            \end{enumerate}
		\end{enumerate}
    Return $G$.
\end{enumerate}

It is now straightforward to show that the algorithm is correct and bounded by our claimed running time.

\begin{restatable}{re-thm}{nnigraph}
\label{thm:nni_graph}
The NNI Graph Construction problem can be solved in $\OhOf{mn^2}$-time.
\end{restatable}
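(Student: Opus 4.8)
The plan is to reuse the proof of Theorem~\ref{thm:spr_graph} almost verbatim, substituting \function{NNINeighbors} for \function{SPRNeighbors} and invoking the soundness, completeness, and running-time guarantees for \function{NNINeighbors} established in Section~\ref{sec:data_structure_extended}. First I would bound the running time. \function{CreateAFContainer} runs in constant time by Lemma~\ref{lem:create}. In the first loop, adding a vertex takes constant time, and since NNI moves are a subset of SPR moves the same \function{Insert} function applies; by Lemma~\ref{lem:insert} each of the $m$ insertions takes amortized $\OhOf{n^2}$ time, for $\OhOf{mn^2}$ total. In the second loop, each of the $m$ calls to \function{NNINeighbors} takes $\OhOf{n^2}$ time (it iterates over the $\OhOf{n}$ edges of $T_i$, each edge requiring an $\OhOf{n}$-time SDLNewick construction and trie lookup), again $\OhOf{mn^2}$ total; this is precisely where the algorithm falls short of the $\OhOf{mn}$ size of the NNI graph, as noted in the conclusions.

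Next I would handle the edges. Each tree has $\OhOf{n}$ NNI neighbors, so $\OhOf{mn}$ edges are added overall. As in the SPR case, every edge $(j,i)$ is oriented toward the current tree $T_i$ and $i$ increases through the loop, so the edges appended to any fixed vertex $j$'s list arrive in increasing order of their other endpoint; hence each edge can be appended to the end of $j$'s list, or recognized as a duplicate, in $\OhOf{1}$ time, using the same array-of-sorted-arrays adjacency-list representation. This keeps the total running time at $\OhOf{mn^2}$.

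Then correctness follows exactly the Theorem~\ref{thm:spr_graph} template: after the first loop the AFContainer holds every tree of $\mathcal{T}$ with its adjacencies and $G$ has vertex set $\set{1,\dots,m}$; after the second loop I argue $G$ is precisely the NNI graph by ruling out (i) an edge between two non-NNI-neighbors, impossible since such an edge could only be added in some iteration $y$, contradicting the soundness of \function{NNINeighbors}$(T_y)$; (ii) a repeated edge, impossible since edges reach each adjacency list in sorted order; and (iii) a missing edge $(u,v)$ between genuine NNI neighbors, impossible since in iteration $v$ the completeness of \function{NNINeighbors}$(T_v)$ forces $u$ into the returned list and hence $(u,v)$ into $G$. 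The only content beyond mechanically transcribing the SPR argument is the behavior of \function{NNINeighbors}; the main obstacle is therefore isolating which agreement-forest data (in particular the tree IDs that recur across several ForestTrie lists, exploiting the characterization from~\cite{whidden2016ricci} used in Lemma~\ref{lem:neighbors} that multiple distinct SPR moves between $T$ and $T'$ correspond exactly to NNI adjacency) pick out NNI rather than general SPR neighbors, and showing that \function{NNINeighbors} reports each NNI neighbor and only $\OhOf{n}$ duplicates within the claimed $\OhOf{n^2}$ time. Once that lemma is in place, the theorem is immediate.
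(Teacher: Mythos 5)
Your proposal matches the paper's proof, which likewise just observes that \function{Construct-NNI-Graph} differs from \function{Construct-SPR-Graph} only in calling \function{NNINeighbors}, and then transfers the correctness and $\OhOf{mn^2}$ running-time argument of Theorem~\ref{thm:spr_graph} using Lemma~\ref{lem:nni:neighbors} in place of Lemma~\ref{lem:neighbors}. The only slight divergence is your closing speculation that \function{NNINeighbors} should mine duplicate IDs across ForestTrie lists; the paper's version instead directly enumerates the $\OhOf{n}$ aunt-edge NNI moves and looks each resulting tree up in the IDTrie, which yields the same $\OhOf{n^2}$ bound without any duplicate bookkeeping.
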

\begin{proof}
The only change in \function{Construct-NNI-Graph} from \function{Construct-SPR-Graph} is the use of the \function{NNINeighbors} function instead of the \function{SPRNeighbors} function.
Therefore, the correctness and running time bound follows from similar arguments to those in the proof of Theorem~\ref{thm:spr_graph} using Lemma~\ref{lem:nni:neighbors} in place of Lemma~\ref{lem:neighbors}.
\end{proof}

Finally, we consider the TBR Graph Construction Problem, and show that it can be solved in $\OhOf{mn^3}$-time.
The additional $\OhOf{n}$ factor in the running time stems from the fact that trees have $\OhOf{n^3}$ TBR neighbors as opposed to $\OhOf{n^2}$ SPR neighbors.
The two main operations are again inserting trees into the AFContainer and identifying tree adjacencies.

TBR operations are a superset of SPR operations, so the same \function{Insert} function that we used in the \function{Construct-SPR-Graph} algorithm  cannot be used here as it does not include information to identify TBR adjacencies.
We instead apply a \function{TBRInsert} function (Section~\ref{sec:data_structure_extended}) that accounts for the fact that TBR adjacencies are uniquely determined by unrooted maximum agreement forests~\cite{allen01} rather than the rooted maximum agreement forests that identify SPR adjacencies.
We then apply the \function{TBRNeighbors} function (also see Section~\ref{sec:data_structure_extended}) in turn to each tree to determine which edges to add to the graph.

The high-level steps are as follows:

\vspace{0.5em}
\function{Construct-TBR-Graph($\mathcal{T}$)}
\vspace{-1em}
\begin{enumerate}[label={\arabic*}.]
	\item Let $A \leftarrow$ \function{CreateAFContainer}().
	\item Let $G$ be an empty graph.
    \item For $i$ in $1$ to $m$:
    \begin{enumerate}[nosep]
		\item Add a vertex $i$ to $G$ representing tree $T_i$.
    	\item $A$.\function{TBRinsert}($T_i$).
    \end{enumerate}
    \item For $i$ in $1$ to $m$:
	    \begin{enumerate}[nosep]
            \item Let $N \leftarrow$ $A$.\function{TBRNeighbors($T_i$)}.
            \item for each neighbor ID $n \in N$:
            \begin{enumerate}[nosep]
            	\item Add an edge $e = (n,i)$ to $G$.
            \end{enumerate}
		\end{enumerate}
    Return $G$.
\end{enumerate}

It is now straightforward to show that the algorithm is correct and bounded by our claimed running time.

\begin{restatable}{re-thm}{tbrgraph}
\label{thm:tbr_graph}
The TBR Graph Construction problem can be solved in $\OhOf{mn^3}$-time.
\end{restatable}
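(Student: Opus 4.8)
The plan is to follow the template of the proof of Theorem~\ref{thm:spr_graph} almost verbatim, substituting the TBR-specific data-structure operations for their SPR analogues and adjusting the neighbor-count bookkeeping. The \function{Construct-TBR-Graph} algorithm differs from \function{Construct-SPR-Graph} only in using \function{TBRinsert} in place of \function{Insert} and \function{TBRNeighbors} in place of \function{SPRNeighbors}; so the argument decomposes into (i) a running-time bound and (ii) a correctness argument, each obtained by plugging in the corresponding TBR lemmas.

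For the running time, I would first invoke Lemma~\ref{lem:create} for the constant-time \function{CreateAFContainer}() call. The first loop adds $m$ vertices in $\OhOf{1}$ time each and applies \function{TBRinsert} $m$ times. Here I expect an analogue of Lemma~\ref{lem:insert} giving an $\OhOf{n^2}$ amortized bound per \function{TBRinsert} (the insert still stores the $\OhOf{n}$ two-component forests obtained by removing single edges, each of SDLNewick size $\OhOf{n}$; the only change is that the stored forests are the \emph{unrooted} two-component forests characterizing TBR adjacency via Lemma~\ref{lem:tbr_two_component}), for a total of $\OhOf{mn^2}$. The second loop applies \function{TBRNeighbors} $m$ times; by the TBR analogue of Lemma~\ref{lem:neighbors} this costs $\OhOf{n^3}$ per tree --- the extra factor of $n$ arising because each of the $\OhOf{n}$ candidate forests is now a bisection rather than a cut-and-regraft, so there are $\OhOf{n}$ reattachment choices to enumerate when generating the forest keys --- giving $\OhOf{mn^3}$; and since each tree has $\OhOf{n^3}$ TBR neighbors with at most $\OhOf{n^2}$ duplicates (from pairs of trees sharing several two-component forests), adding all the edges costs $\OhOf{mn^3}$. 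The crucial point, exactly as in Theorem~\ref{thm:spr_graph}, is that in the $i$th iteration of the second loop every edge $(j,i)$ is added pointing toward the current tree; since $j$ can be the target only in its own later iterations are irrelevant and the targets of edges added from $j$ occur in increasing order, each edge goes to the end of $j$'s adjacency list and takes $\OhOf{1}$ time to insert or to recognize as a duplicate. Summing, the total is $\OhOf{mn^3}$.

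For correctness, I would reuse the three-part contradiction argument from Theorem~\ref{thm:spr_graph}. After the first loop, the TBR analogue of Lemma~\ref{lem:insert} guarantees the AFContainer holds every $T_i$ together with the forest keys recording its TBR adjacencies, and the vertex set of $G$ is $\set{1,\dots,m}$. After the second loop: (1) if $G$ contained an edge $(x,y)$ between non-TBR-neighbors, it was added in iteration $y$, contradicting the TBR analogue of Lemma~\ref{lem:neighbors}; (2) no duplicate edge can appear, since edges are added in sorted order to each list; (3) if some TBR-adjacent pair $(u,v)$ were missing, then by the TBR analogue of Lemma~\ref{lem:neighbors} and Lemma~\ref{lem:tbr_two_component} the call \function{TBRNeighbors}($T_v$) returns $u$, so edge $(u,v)$ would have been added. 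Hence $G$ is exactly the TBR graph of $\mathcal{T}$.

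The main obstacle is not in this theorem itself but in establishing the TBR analogues of Lemmas~\ref{lem:insert} and~\ref{lem:neighbors} --- in particular, verifying that \function{TBRinsert} can still store enough information in $\OhOf{n^2}$ amortized time and $\OhOf{mn^2}$ space while keying on \emph{unrooted} two-component agreement forests, and that the duplicate count stays $\OhOf{n^2}$ so the edge-insertion accounting above goes through; those lemmas live in the extended data-structure section and are assumed here.
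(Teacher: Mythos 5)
Your proposal is correct and follows essentially the same route as the paper, which proves Theorem~\ref{thm:tbr_graph} by reducing to the argument of Theorem~\ref{thm:spr_graph} with Lemma~\ref{lem:tbr:insert} in place of Lemma~\ref{lem:insert} and Lemma~\ref{lem:tbr:neighbors} in place of Lemma~\ref{lem:neighbors}. (One small note on the deferred lemmas: the extra factor of $n$ in \function{TBRNeighbors} comes from the size of the ForestTrie lists---up to $\OhOf{n^2}$ trees can share a given unrooted two-component forest, i.e.\ a tree has $\OhOf{n^3}$ TBR neighbors---not from enumerating reattachment choices when generating forest keys, and the paper bounds the duplicates by $\OhOf{n}$ rather than $\OhOf{n^2}$; neither point affects your $\OhOf{mn^3}$ accounting.)
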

\begin{proof}
The two changes in \function{Construct-TBR-Graph} from \function{Construct-SPR-Graph} are the use of the \function{TBRInsert} function instead of the \function{Insert} function and the use of the \function{TBRNeighbors} function instead of the \function{SPRNeighbors} function.
Therefore, the correctness and running time bound follows from similar arguments to those in the proof of Theorem~\ref{thm:spr_graph} using Lemma~\ref{lem:tbr:insert} in place of Lemma~\ref{lem:insert} and Lemma~\ref{lem:tbr:neighbors} in place of Lemma~\ref{lem:neighbors}.
\end{proof}

\section{An efficient data structure for comparing NNI and TBR agreement forests}
\label{sec:data_structure_extended}

In this section we extend our AFContainer data structure from Section~\ref{sec:data_structure} to infer NNI and TBR adjacencies.
The basic substructures of the AFContainer remain the same.
To infer NNI adjacencies, we rely on the fact that there are only $\OhOf{n}$ NNI neighbors of a given tree with $n$ leaves.
This allows us to directly infer each NNI neighbor at a cost of $\OhOf{n}$-time each while maintaining the same overall quadratic running time of the \function{SPRNeighbors} function.

There are $\OhOf{n^3}$ TBR neighbors of any given tree, however, a linear factor larger than the number of SPR neighbors.
In addition, TBR operations are a superset of SPR operations and we require additional information to infer TBR adjacencies.
We introduce a new \function{TBRInsert} function that stores the two-component unrooted agreement forests which correspond to TBR adjacencies as opposed to the two-component rooted SPR agreement forests or two-component partially unrooted SPR agreement forests.
We then apply a new \function{TBRNeighbors} function that uses these agreement forests to infer the TBR adjacencies, in an analogous manner to \function{SPRNeighbors}.

We first present pseudocode for the \function{NNINeighbors} function.
We assume an arbitrary smallest descendant label rooting if the input tree is unrooted.
In the following, the \emph{aunt edge} of an edge $e$ is the edge that is sibling to $e's$ parent edge.

\vspace{0.5em}
\function{NNINeighbors}($T$)
\vspace{-1em}
\begin{enumerate}[label={\arabic*}.]
    \item Let $L$ be an empty list of integers.
	\item For each edge $e$ of $T$ with an aunt edge:
    \begin{enumerate}[nosep]
	    \item Let $T'$ be the tree obtained by the NNI operation moving the subtree rooted below $e$ to it's aunt edge.
	    \item If $A$.IDTrie[\function{SDLNewick}($T'$)] is nonempty, append its value to $L$.
    \end{enumerate}
    \item Return $L$.
\end{enumerate}

We show that this function can be implemented to take $\OhOf{n^2}$-time.

\begin{restatable}{re-lem}{nnineighbors}
\label{lem:nni:neighbors}
A list of the NNI neighbors of a binary tree $T$ that are stored in an AFContainer can be returned in SDLNewick format in $\OhOf{n^2}$-time.
\end{restatable}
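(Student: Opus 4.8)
The plan is to verify, in close parallel with the proof of Lemma~\ref{lem:neighbors}, that the \function{NNINeighbors}($T$) pseudocode is correct and runs in $\OhOf{n^2}$-time.

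For the running time, I would first note that if $T$ is given unrooted we compute its smallest descendant label rooting in $\OhOf{n}$-time, exactly as in the proof of Lemma~\ref{lem:SDLNewick_properties:unique}. A binary tree has $\OhOf{n}$ edges, so the loop executes $\OhOf{n}$ times. Each iteration copies $T$ in $\OhOf{n}$-time and performs a single local rearrangement (swapping the subtree below $e$ with its aunt subtree, which creates no degree-two node) in $\OhOf{1}$ additional time; by Lemma~\ref{lem:SDLNewick_properties:encode} it then computes \function{SDLNewick}($T'$) in $\OhOf{n}$-time, which by Lemma~\ref{lem:SDLNewick_properties:space} is a key of length $\OhOf{n}$, so the ID trie lookup is $\OhOf{n}$-time, and appending an ID is $\OhOf{1}$. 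Thus each iteration costs $\OhOf{n}$-time and the function runs in $\OhOf{n^2}$-time overall. To report the neighbors as SDLNewick strings rather than as IDs, I would observe that the string \function{SDLNewick}($T'$) is already computed in the iteration that discovers each neighbor (or, alternatively, recover each string from its ID via Lemma~\ref{lem_Newick}); since there are only $\OhOf{n}$ neighbors, this adds at most $\OhOf{n^2}$-time.

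For correctness, I would show that $L$ consists precisely of the IDs of the NNI neighbors of $T$ that have been inserted into the AFContainer, up to $\OhOf{n}$ duplicate entries. Soundness is immediate: every value appended to $L$ is $A$.IDTrie[\function{SDLNewick}($T'$)] for a tree $T'$ obtained from $T$ by one NNI move, and by Lemma~\ref{lem:insert} together with the uniqueness of SDLNewick (Lemma~\ref{lem:SDLNewick_properties:unique}) this value is the ID of a tree equal to $T'$, hence of an NNI neighbor of $T$. Completeness reduces to the combinatorial claim that the set of moves the loop enumerates --- for every edge $e$ of the rooted $T$ possessing an aunt edge, swap the subtree below $e$ with its aunt subtree --- reaches every NNI neighbor of $T$: each internal edge of the rooted tree gives rise to exactly two NNI moves, the swaps of one of the two child subtrees at its lower endpoint with the sibling subtree at its upper endpoint, and these are precisely the aunt-edge swaps of the two child edges of that lower endpoint. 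Given any NNI neighbor $T'$ in the AFContainer, the iteration producing $T'$ computes \function{SDLNewick}($T'$), finds it nonempty in the ID trie (Lemma~\ref{lem:insert}), and appends its ID; since the loop has $\OhOf{n}$ iterations, $L$ has $\OhOf{n}$ entries and hence at most $\OhOf{n}$ duplicates, which --- as in Theorem~\ref{thm:spr_graph} --- does not affect the downstream use of the function.

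I expect the main obstacle to be the completeness claim: one has to reconcile the definition of an NNI move (the SPR move in which the regraft point $v'$ shares a neighbor with $v$) with the ``swap with aunt subtree'' operation, and check that every NNI move on $T$ is produced by some iteration. This needs a short case analysis of how the chosen rooting interacts with NNI moves across each internal edge --- in particular, confirming that internal edges incident to the root, and the trifurcation created when an unrooted $T$ is rooted at the internal node adjacent to its smallest leaf, are all covered (they are, since each such move is equally realized as an aunt swap at a child edge one level below). The running-time bound and the soundness direction are routine given Lemmas~\ref{lem:SDLNewick_properties:space}, \ref{lem:SDLNewick_properties:encode}, \ref{lem:SDLNewick_properties:unique}, and~\ref{lem:insert}.
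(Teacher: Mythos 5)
Your proposal is correct and follows essentially the same route as the paper's proof: an $\OhOf{n}$-iteration loop with $\OhOf{n}$ work per iteration (copy, aunt-edge NNI swap, SDLNewick encoding via Lemmas~\ref{lem:SDLNewick_properties:space} and~\ref{lem:SDLNewick_properties:encode}, trie lookup), soundness via uniqueness of SDLNewick strings together with Lemma~\ref{lem:insert}, and completeness via the claim that aunt-edge swaps realize every NNI neighbor. The one difference is that the paper dispatches that completeness claim --- the step you single out as the main obstacle --- by citing Whidden and Matsen~\cite{whidden2016ricci} rather than proving it, so your sketched case analysis (rooting, root-incident edges, the trifurcation) is a harmless elaboration rather than a divergence; your explicit handling of converting IDs to SDLNewick strings is slightly more careful than the paper's, which leaves that step implicit.
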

\begin{proof}
Let $T$ be a binary tree with $n$ leaves.

We first show that the \function{NNINeighbors}($T$) function takes $\OhOf{n^2}$-time.
$T$ has $\OhOf{n}$ edges, and so the for loop is applied $\OhOf{n}$ times.
We now show that each loop iteration takes $\OhOf{n}$-time, for a total of $\OhOf{n^2}$-time.
It takes $\OhOf{n}$-time to copy $T$ and apply an NNI operation to obtain $T'$.
By Lemma~\ref{lem:SDLNewick_properties:space} and Lemma~\ref{lem:SDLNewick_properties:encode}, it takes $\OhOf{n}$-time to obtain the SDLNewick string for $T'$.
It then takes $\OhOf{n}$-time for a trie lookup in the IDTrie with that string as the key, and constant time to append an integer to a list, if the lookup is successful.

Now, we show that the algorithm is correct, that is it returns a list containing the ID values of every NNI neighbor of $T$ that has been inserted into the AFContainer and no other values.
Assume that this is not true, for the purpose of obtaining a contradiction.
We first observe that the algorithm only adds values to $L$ from the IDTrie.
By Lemma~\ref{lem:insert}, these correspond to trees that have been inserted into the AFContainer.
Then there are two cases, depending on whether $L$ contains an ID of a tree that is not an NNI neighbor of $T$ or $L$ is missing an NNI neighbor of $T$.
Consider the first case.
Then there exists a tree $T''$ with ID $I \in L$ that is not an NNI neighbor of $T$.
Consider the iteration of the for loop with tree $T'$ that added $I$ to $L$.
By Lemma~\ref{lem:insert}, $T''$ and $T'$ must have the same SDLNewick string.
Then, by Lemma~\ref{lem:SDLNewick_properties:encode}, $T'$ and $T''$ are the same tree.
However, $T'$ was obtained from $T$ by an NNI operation, contradicting the fact that $T''$ is not an NNI neighbor of $T$.

Now consider the second case.
Then there exists a tree $T''$ with ID $I \notin L$ such that $T''$ is an NNI neighbor of $T$.
Whidden and Matsen~\cite{whidden2016ricci} showed that the exact set of NNI neighbors of a tree can be obtained by NNI operations on aunt edges.
Then there exists an edge $e$ of $T$ such that $T''$ can be obtained by the NNI operation moving the subtree rooted below $e$ to it's aunt edge.
Thus, $I$ would have been added to $L$ in the iteration of the for loop that considered $e$, a contradiction.
\end{proof}

We next present pseudocode for the \function{TBRInsert} function.
This function is similar to the \function {Insert} function with the exception that the agreement forest keys of the ForestTrie are unrooted agreement forests.
This is achieved by removing the root label leaf from the second component induced by each edge removal.
Note that a single AFContainer can not be used to infer both SPR and TBR adjacencies, as any second insert function on the same tree is ignored to prevent duplicate IDTrie keys.
However, it would not be difficult to introduce a function that duplicated the behaviour of both the \function{Insert} and \function{TBRInsert} functions by applying both for loops.

\vspace{0.5em}
\function{TBRInsert}($T$)
\vspace{-1em}
\begin{enumerate}[label={\arabic*}.]
	\item Let $I$ be the number of trees in $A$.TreeArray.
    \item Let $S \leftarrow $ \function{SDLNewick}($T$).
    \item If $A$.IDTrie[$S$] exists:
   	\begin{enumerate}[nosep]
		\item Return.
    \end{enumerate}
    \item Let $A$.IDTrie[$S] \leftarrow I$.
    \item Let $A$.TreeArray[$I$] $\leftarrow S$.
	\item For each edge $e$ of $T$:
    \begin{enumerate}[nosep]
    	\item Let $e_\rho$ be the edge adjacent to $\rho$ in $T$.
	    \item Let $F \leftarrow $\function{SDLNewick}($T \div e \div e_\rho \setminus \set{\rho}$).
	    \item Add $I$ to $A$.ForestTrie[$F$], creating the list if necessary.
    \end{enumerate}
    \item Return.
\end{enumerate}

We again require three conditions of the insert function given a tree $T$.
After \function{TBRInsert}($T$) returns,
\begin{enumerate}
\item $A$.IDTrie[\function{SDLNewick}($T$)] is a unique integer $I$,
\item $A$.TreeArray[$I$] is \function{SDLNewick}($T$), and
\item For each edge $e$ of $T$, $A$.ForestTrie[$F$] is a list that contains $I$ exactly once, where $F = $ \function{SDLNewick}($T \div e \div \rho$).
\end{enumerate}

\begin{restatable}{re-lem}{tbrinsert}
\label{lem:tbr:insert}
A binary tree and its unrooted agreement forests can be inserted into an AFContainer in $\OhOf{n^2}$-time.
\end{restatable}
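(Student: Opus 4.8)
The plan is to mirror the proof of Lemma~\ref{lem:insert}, since \function{TBRInsert} differs from \function{Insert} in only two respects: it fixes the single edge $e_\rho$ once, and for each edge $e$ it keys the ForestTrie by the SDLNewick string of the \emph{unrooted} two-component forest $T \div e \div e_\rho \setminus \set{\rho}$ rather than by the rooted forest $T \div e$.

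For the running time I would argue, exactly as for \function{Insert}, that every step outside the for loop costs amortized $\OhOf{n}$: constant time to read $I$; $\OhOf{n}$-time to build $S = \function{SDLNewick}(T)$ and to do the IDTrie membership test and possible insertion, since the key has length $\OhOf{n}$ by Lemmas~\ref{lem:SDLNewick_properties:space} and~\ref{lem:SDLNewick_properties:encode}; amortized $\OhOf{1}$ to append $S$ to the TreeArray; and $\OhOf{n}$ once to locate $e_\rho$. The for loop runs $\OhOf{n}$ times, and each iteration copies $T$ in $\OhOf{n}$-time, performs a constant number of edge deletions, the $\rho$-leaf deletion, and the resulting node suppressions in $\OhOf{1}$ further time, builds the forest's SDLNewick string in $\OhOf{n}$-time by Lemma~\ref{lem:SDLNewick_AF_properties:encode}, and does an $\OhOf{n}$-time ForestTrie lookup/insert plus an $\OhOf{1}$ append. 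This totals $\OhOf{n^2}$.

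For correctness I would induct on the insertion sequence, assuming the three conditions hold for all earlier insertions. If an equivalent tree was already inserted, then by uniqueness of SDLNewick strings (Lemma~\ref{lem:SDLNewick_properties:unique}) $A$.IDTrie$[S]$ exists and the function returns without modifying anything, so the conditions persist. Otherwise $A$.IDTrie$[S]$ does not exist, a fresh $I$ equal to the current TreeArray length is assigned (so it exceeds every previous ID and is unique, condition~1), TreeArray$[I]$ is set to $S$ (condition~2), and the loop visits each edge $e$ exactly once, adding $I$ to the list keyed by the SDLNewick string of the corresponding unrooted forest, with Lemma~\ref{lem:SDLNewick_AF_properties:unique} guaranteeing that these strings identify the forests.

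The main obstacle is condition~3. In the rooted case one invokes the fact that distinct edges of $T$ yield non-isomorphic forests $T \div e$, but this can fail once $\rho$ is stripped: the two edges incident to the original root produce the same two-component unrooted forest, hence the same ForestTrie key, so $I$ gets appended to that one list twice. I would handle this by observing that this is the \emph{only} such coincidence --- the two subtrees hanging below the root are the only pair of clades of $T$ that partition $X \setminus \set{\rho}$ --- so at most one duplicate copy of $I$ can occur in any ForestTrie list. Thus either a cheap $\OhOf{n}$-time check of whether $I$ already ends the list before each append restores the literal ``contains $I$ exactly once'' statement without affecting the $\OhOf{n^2}$ bound, or one relaxes condition~3 to ``contains $I$, with $\OhOf{1}$ duplicates,'' which the downstream neighbor-finding routine can tolerate in the same way that \function{SPRNeighbors} already tolerates its duplicates (cf.\ Lemma~\ref{lem:neighbors}).
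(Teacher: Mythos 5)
Your proof follows the same route as the paper's, which simply states that the argument ``follows analogously to that of Lemma~\ref{lem:insert}''; your running-time accounting is exactly that analogy and is correct. Where you genuinely add something is condition~3: a verbatim transfer of the Insert proof would invoke the claim that distinct edges of $T$ yield non-isomorphic forests, and as you observe this fails once $\rho$ is stripped, because the two edges incident to the original root collapse to the same edge of the underlying unrooted tree and hence produce the same two-component unrooted forest and the same ForestTrie key. Your diagnosis that this is the only coincidence is right --- two edges produce the same key precisely when they induce the same split of $X\setminus\set{\rho}$, which in a binary tree happens only for that one pair --- and both of your remedies are sound: checking whether $I$ is already the last element of the list before appending (an $\OhOf{1}$ test, since $I$ is the largest ID and lists only grow at the end) preserves the literal ``exactly once'' condition within the $\OhOf{n^2}$ bound, while relaxing condition~3 to allow a single duplicate per list is equally harmless because \function{TBRNeighbors} (Lemma~\ref{lem:tbr:neighbors}) already tolerates $\OhOf{n}$ duplicate IDs. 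So your proposal is not merely correct; it patches a step that the paper's one-line proof glosses over. One further loose end, which neither you nor the paper treats, is the degenerate iteration $e=e_\rho$, whose key is a one-component forest; it collides with no two-component key and so affects neither correctness nor the time bound.
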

\begin{proof}
The proof follows analogously to that of Lemma~\ref{lem:insert}.
\end{proof}

We now present pseudocode for the \function{TBRNeighbors}($T$) function.
Again, the only difference from the \function{SPRNeighbors} function is the use of unrooted agreement forests.
We prove that it takes $\OhOf{n^3}$-time and correctly returns all of the ID numbers of TBR neighbors of a tree that have been inserted into the AFContainer with the \function{TBRInsert} function.
Again, as with the \function{SPRNeighbors} function, the ID numbers are not sorted and there may be $\OhOf{n}$ duplicate ID values in the list.

\vspace{0.5em}
\function{TBRNeighbors}($T$)
\vspace{-1em}
\begin{enumerate}[label={\arabic*}.]
    \item Let $L$ be an empty list of integers.
	\item Let $I \leftarrow -1$.
    \item If $A$.IDTrie[\function{SDLNewick}($T$)] exists:
    \begin{enumerate}[nosep]
		\item Let $I \leftarrow A$.IDTrie[\function{SDLNewick}($T$)].
    \end{enumerate}
	\item For each edge $e$ of $T$:
    \begin{enumerate}[nosep]
    	\item Let $e_\rho$ be the edge adjacent to $\rho$ in $T$.
	    \item Let $F \leftarrow $\function{SDLNewick}($T \div e \div e_\rho \setminus \set{\rho}$).
	    \item If the list $A$.ForestTrie[$F$] is nonempty, append its non-$I$ elements to $L$.
    \end{enumerate}
    \item Return $L$.
\end{enumerate}

\begin{restatable}{re-lem}{tbrneighbors}
\label{lem:tbr:neighbors}
The TBR neighbors of a binary tree $T$ that are stored in an AFContainer can be identified in $\OhOf{n^3}$-time.
\end{restatable}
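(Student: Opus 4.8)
The plan is to follow the proof of Lemma~\ref{lem:neighbors} essentially line for line, replacing each SPR-specific ingredient by its TBR counterpart. The function \function{TBRNeighbors} is structurally identical to \function{SPRNeighbors}, except that it keys the \function{ForestTrie} with the forests $T \div e \div e_\rho \setminus \set{\rho}$ generated by \function{TBRInsert}; accordingly the TBR distance characterization of Lemma~\ref{lem:tbr_two_component} takes over the role played by Lemmas~\ref{lem:rspr_two_component} and~\ref{lem:uspr_two_component} in the SPR argument, and Lemma~\ref{lem:tbr:insert} takes over the role of Lemma~\ref{lem:insert}. The proof then splits into the usual two halves, correctness and running time.

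For correctness I would argue in the two familiar directions. First, no spurious ID is returned: \function{TBRNeighbors} only appends values drawn from \function{ForestTrie} lists, so by Lemma~\ref{lem:tbr:insert} every appended $I' \ne I$ is the ID of some previously inserted tree $T'$, and there are edges $e$ of $T$ and $e'$ of $T'$ whose forests $T \div e \div e_\rho \setminus \set{\rho}$ and $T' \div e' \div e'_\rho \setminus \set{\rho}$ have the same SDLNewick string; by uniqueness of the forest encoding (Lemma~\ref{lem:SDLNewick_AF_properties:unique}) the two forests are isomorphic, hence $T$ and $T'$ share a two-component agreement forest and $\dtbr{T,T'} = 1$ by Lemma~\ref{lem:tbr_two_component}. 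Second, no neighbour is missed: if $T'$ with ID $I'$ is a TBR neighbour of $T$, Lemma~\ref{lem:tbr_two_component} gives a common two-component agreement forest, which \function{TBRInsert}($T'$) recorded under the key for some edge of $T'$ (Lemma~\ref{lem:tbr:insert}); \function{TBRNeighbors}($T$) builds the same key while processing the corresponding edge of $T$ (again Lemma~\ref{lem:SDLNewick_AF_properties:unique}), so $I'$ lands in $L$. As in Lemma~\ref{lem:neighbors}, an ID may appear in $L$ more than once when a pair of trees shares several two-component forests, which affects neither membership nor the use of the function in Section~\ref{sec:algorithm_extended}.

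For the running time, the loop runs over the $\OhOf{n}$ edges of $T$, and each iteration copies $T$, deletes $e$, $e_\rho$ and the $\rho$ leaf, suppresses degree-two nodes, builds the SDLNewick string of the resulting forest (Lemma~\ref{lem:SDLNewick_AF_properties:encode}), and performs a \function{ForestTrie} lookup on this $\OhOf{n}$-length key (Lemma~\ref{lem:SDLNewick_AF_properties:space}) --- all in $\OhOf{n}$-time. The remaining contribution is the number of IDs appended, i.e.\ the total size of the retrieved lists. The intended bound is $\OhOf{n^2}$ per list: a forest of the form \function{TBRInsert} produces is rebuilt into a tree by reattaching its two components along a new edge --- an $\OhOf{n}$-way choice of reattachment point in each component --- together with reinstating the deleted root $\rho$, and careful bookkeeping of where $\rho$ and the cut node sat keeps this count at $\OhOf{n^2}$ rather than $\OhOf{n^3}$, so at most $\OhOf{n^2}$ inserted trees hash to a given key. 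Over the $\OhOf{n}$ edges this totals $\OhOf{n^3}$ appended IDs, matching the $\OhOf{n^3}$ TBR neighbours a tree can have, so \function{TBRNeighbors} runs in $\OhOf{n^3}$-time.

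I expect the list-size bound of the previous paragraph to be the main obstacle, and it is genuinely harder than its SPR analogue: reconstruction now involves two $\ThetaOf{n}$-way reconnection choices rather than one, and the treatment of $\rho$ in \function{TBRInsert} must be pinned down so that it retains exactly enough of the original rooting and cut location that, on the one hand, the list for each forest key has size $\OhOf{n^2}$ (an $\OhOf{n^3}$ list would push \function{TBRNeighbors} to $\OhOf{n^4}$-time) and, on the other hand, a TBR move on $T$ and the corresponding move on a neighbour still produce identical SDLNewick forest keys, so that the adjacency is not overlooked. Verifying both requirements against the precise definitions of $T \div e \div e_\rho \setminus \set{\rho}$ and of the forest SDLNewick encoding is the crux; everything else carries over verbatim from the proof of Lemma~\ref{lem:neighbors}.
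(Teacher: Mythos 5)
Your proposal takes essentially the same route as the paper, whose proof is literally an appeal to the argument of Lemma~\ref{lem:neighbors} with Lemma~\ref{lem:tbr:insert} in place of Lemma~\ref{lem:insert}, Lemma~\ref{lem:tbr_two_component} in place of Lemmas~\ref{lem:rspr_two_component} and~\ref{lem:uspr_two_component}, and the $\OhOf{n^3}$ TBR neighborhood size in place of the $\OhOf{n^2}$ SPR one. The list-size ``crux'' you flag resolves exactly as you sketch --- a two-component unrooted agreement forest is completed to a stored tree by an $\OhOf{n}$-way attachment choice in each component, and since distinct IDs correspond to distinct (canonically represented) trees the $\rho$ placement adds no extra factor, giving $\OhOf{n^2}$ IDs per key and $\OhOf{n^3}$ total --- which is the counting the paper leaves implicit, so your argument is correct and matches the paper's.
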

\begin{proof}
The proof follows analogously to that of Lemma~\ref{lem:neighbors}, using Lemma~\ref{lem:tbr:insert} in place of Lemma~\ref{lem:insert} and the fact that there are $\OhOf{n^3}$ TBR neighbors of an $n$-leaf tree as opposed to $\OhOf{n^2}$ SPR neighbors.
\end{proof}

Finally, we again consider the case where one wishes to obtain the SDLNewick strings of a set of TBR neighbors, rather than just their ID numbers.
As was the case with SPR, this adds a linear factor to the amount of computation required.
Thus, this approach takes $\OhOf{n^4}$-time in the worst case.
However, this is again not necessarily more computationally expensive to compute when the fraction of the TBR neighborhood stored in the AFContainer is small.

\begin{restatable}{re-lem}{tbrNewickneighbors}
\label{lem:tbr:Newick_neighbors}
A list of the TBR neighbors of a binary tree $T$ that are stored in an AFContainer can be returned in SDLNewick format in $\OhOf{n^3 + Xn}$-time, where $X$ is the number of neighbors.
\end{restatable}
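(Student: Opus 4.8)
The plan is to mirror the proof of Lemma~\ref{lem:Newick_neighbors} (the SPR analogue), composing the neighbor-identification routine with the ID-to-string lookup. First I would call $A$.\function{TBRNeighbors}($T$) to obtain a list $L$ of tree IDs; by Lemma~\ref{lem:tbr:neighbors} this correctly produces every ID of a stored TBR neighbor of $T$ (possibly with $\OhOf{n}$ duplicate entries) and runs in $\OhOf{n^3}$-time, presupposing the AFContainer was populated with \function{TBRInsert} so that the ForestTrie keys are the unrooted two-component agreement forests that identify TBR adjacencies (Lemma~\ref{lem:tbr_two_component}). Then I would walk over $L$ and replace each ID $I$ with $A$.\function{SDLNewick}($I$); by Lemma~\ref{lem_Newick} each such lookup takes $\OhOf{n}$-time and returns the $\OhOf{n}$-size SDLNewick string of the corresponding tree. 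Collecting these strings gives the list $L'$ of SDLNewick representations of $T$'s TBR neighbors, which is the desired output.

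For the running-time bound I would account as follows. The list $L$ has length $X + \OhOf{n}$, since it contains the $X$ distinct neighbor IDs plus at most $\OhOf{n}$ duplicates (Lemma~\ref{lem:tbr:neighbors}). Converting each entry to an SDLNewick string therefore costs $\OhOf{(X + n)\, n} = \OhOf{Xn + n^2}$-time, and since $n^2 = \OhOf{n^3}$ this is subsumed by $\OhOf{n^3 + Xn}$. Adding the $\OhOf{n^3}$ cost of the initial \function{TBRNeighbors} call, the total is $\OhOf{n^3 + Xn}$, as claimed. (If a strictly duplicate-free list is wanted, one can deduplicate $L$ with an additional sorting pass, which only contributes a lower-order logarithmic-type term and is not needed here.)

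There is no real obstacle in this argument: it is a direct composition of results already established for the extended AFContainer. The one point that requires care is the bookkeeping around duplicates --- one must observe that $X$ counts \emph{distinct} TBR neighbors while \function{TBRNeighbors} returns a slightly longer list, and then verify that the resulting $\OhOf{n^2}$ conversion overhead is harmless because it is dominated by the $\OhOf{n^3}$ cost of \function{TBRNeighbors} itself. I would also note, as in the discussion surrounding the SPR case, that because $X$ can be as large as $\ThetaOf{n^3}$ this bound degrades to $\OhOf{n^4}$ in the worst case, but is advantageous when only a small fraction of the TBR neighborhood is present in the AFContainer.
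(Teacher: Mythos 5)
Your proposal is correct and follows essentially the same route as the paper: it composes \function{TBRNeighbors} (Lemma~\ref{lem:tbr:neighbors}, $\OhOf{n^3}$-time) with the per-ID \function{SDLNewick}($I$) lookup at $\OhOf{n}$ per neighbor, exactly as the paper does by analogy with Lemma~\ref{lem:Newick_neighbors}. Your extra bookkeeping on the $\OhOf{n}$ duplicates (showing the $\OhOf{n^2}$ overhead is absorbed by the $\OhOf{n^3}$ term) and your citation of the ID-to-string lemma rather than the tree-to-ID lemma are, if anything, slightly more careful than the paper's version.
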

\begin{proof}
The proof follows analogously to that of Lemma~\ref{lem:Newick_neighbors} using Lemma~\ref{lem:tbr:insert} in place of Lemma~\ref{lem:insert}, Lemma~\ref{lem:tbr:neighbors} in place of Lemma~\ref{lem:neighbors}, and Lemma~\ref{lem:tbr_two_component} in place of Lemmas~\ref{lem:rspr_two_component}~and~\ref{lem:uspr_two_component}.
\end{proof}

} 

\end{document}